\newtheorem{theorem}{Theorem}
\newtheorem{fact}{Fact}[section]
\newtheorem{lemma}{Lemma}
\newcommand{\ppfcmps}{\mathcal{E}_{\rm pPFCMPS}}
\newcommand{\rpfcmps}{\mathcal{E}_{\rm rPFCMPS}}
\newcommand{\rhmps}{\mathcal{E}_{\rm rMPS}}
\newcommand{\holo}{\mathcal{E}_{\rm holoTNS}}
\newcommand{\ppfctns}{\mathcal{E}_{\rm pPFCTNS}}
\newcommand{\rpfctns}{\mathcal{E}_{\rm rPFCTNS}}
\newcommand{\rhtns}{\mathcal{E}_{\rm rTNS}}
\begin{document}

\title{Pseudoentanglement from tensor networks}

\author{Zihan Cheng}
\author{Xiaozhou Feng}
\author{Matteo Ippoliti}
\affiliation{Department of Physics, The University of Texas at Austin, Austin, TX 78712, USA}

\begin{abstract}
Pseudoentangled states are defined by their ability to hide their entanglement structure: they are indistinguishable from random states to any observer with polynomial resources, yet can have much less entanglement than random states. Existing constructions of pseudoentanglement based on phase- and/or subset-states are limited in the entanglement structures they can hide: e.g., the states may have low entanglement on a single cut, on all cuts at once, or on local cuts in one dimension. Here we introduce new constructions of pseudoentangled states based on (pseudo)random tensor networks that affords much more flexibility in the achievable entanglement structures. We illustrate our construction with the simplest example of a matrix product state, realizable as a staircase circuit of pseudorandom unitary gates, which exhibits pseudo-area-law scaling of entanglement in one dimension. We then generalize our construction to arbitrary tensor network structures that admit an isometric realization. A notable application of this result is the construction of pseudoentangled `holographic' states whose entanglement entropy obeys a Ryu-Takayanagi `minimum-cut' formula, answering a question posed in \href{https://arxiv.org/abs/2211.00747}{[Aaronson {\it et al.}, arXiv:2211.00747]}.
\end{abstract}

\maketitle

{\it Introduction.}---Entanglement is a fundamental resource in quantum information science and an important principle in theoretical physics, from condensed matter to gravity~\cite{horodecki_quantum_2009,kitaev_topological_2006,levin_detecting_2006,swingle_constructing_2012,maldacena_cool_2013,liu_entanglement_2014,kaufman_quantum_2016,abanin_colloquium_2019}. At the same time, the entanglement structure of large many-body states is very challenging to observe. Specifically, there exist $N$-qubit states with very low entanglement that are {\it computationally indistinguishable} (i.e., indistiguishable to any observer whose resources scale polynomially in $N$) from states with much higher entanglement, such as Haar-random states. This property is knowns as pseudoentanglement~\cite{aaronson_quantum_2023,giurgica-tiron_pseudorandomness_2023,jeronimo_pseudorandom_2024,bouland_public-key_2023}.

Motivated originally by ideas in quantum cryptography~\cite{ji_pseudorandom_2018}, the concept of pseudoentanglement enables connections to many-body physics in contexts including thermal equilibrium in many-body dynamics~\cite{feng_dynamics_2024}, learning of Hamiltonian ground states~\cite{bouland_public-key_2023}, and holography~\cite{aaronson_quantum_2023}, where it may have implications for the complexity of the AdS/CFT ``dictionary''~\cite{bouland_computational_2019}. In general, it is a powerful tool to demonstrate the existence of entanglement structures or other quantum features that are computationally inaccessible~\cite{gu_little_2023,gu_magic-induced_2024}, and thus contributes to sharpen the limits of what is learnable in realistic quantum experiments.

Existing constructions of pseudoentangled state ensembles are based on subset-phase-states, $\ket{\psi_{\mathcal{S},f}} \propto \sum_{x\in \mathcal{S}} (-1)^{f(x)}\ket{x}$, where under suitable constraints on the Boolean function $f:\{0,1\}^N\to \{0,1\}$ and/or the subset $\mathcal{S}\subseteq \{0,1\}^N$ one can prove computational pseudorandomness and enforce an upper bound on entanglement either on specific cuts or on all cuts at once~\cite{aaronson_quantum_2023,giurgica-tiron_pseudorandomness_2023,jeronimo_pseudorandom_2024}. 
These constructions are very powerful and in some cases achieve the theoretical maximum amount of hideable entanglement. However, they natively represent unstructured states where all $N$ qubits are on the same footing, without e.g. a notion of locality. This makes them well suited to emulate maximally random states such as those that describe infinite-temperature equilibrium~\cite{feng_dynamics_2024}, but limits their ability to mimic more structured systems. While it is possible to tailor phase-states to reflect some locality structure~\cite{bouland_public-key_2023}, it remains not clear in general how to mimic physical systems of interest, such as holographic ones.

Motivated by these considerations, in this work we introduce new families of pseudoentangled states with a highly flexible entanglement structure. We leverage tensor networks~\cite{cirac_matrix_2021,evenbly_tensor_2011,zaletel_isometric_2020,jahn_holographic_2021,haghshenas_variational_2022}, which allow us to hard-code locality and the associated entanglement (pseudo-)area-law into our states. Recently proposed efficient constructions of pseudorandom unitaries~\cite{metger_simple_2024} then allow for the efficient preparation of pseudoentangled states whose true (computationally hidden) entanglement structure is that of an arbitrary tensor network that admits an {\it isometric} realization~\cite{zaletel_isometric_2020,haghshenas_variational_2022,soejima_isometric_2020,tepaske_three-dimensional_2021}. These include many examples of interest, from matrix product states (MPS) to projected entangled pair states (PEPS) in any dimension~\cite{cirac_matrix_2021,verstraete_criticality_2006,schuch_computational_2007} and even holographic tensor network states~\cite{pastawski_holographic_2015,hayden_holographic_2016,jahn_holographic_2021,vasseur_entanglement_2019}. 
While random tensor networks have been extensively studied in various contexts~\cite{hayden_holographic_2016,brandao_efficient_2016,nahum_quantum_2017,qi_exact_2013,vasseur_entanglement_2019}, establishing pseudoentanglement requires new technical results in a distinct scaling regime for the bond dimension and the number of state copies, which we address in this work.

{\it Pseudoentanglement.}---An ensemble of pure quantum states $\mathcal{E}$ is pseudoentangled if~\cite{aaronson_quantum_2023} 
(i) it is efficiently preparable, 
(ii) it is computationally indistinguishable from the Haar-random state ensemble, and 
(iii) it has abnormally low entanglement entropy on some cuts. 
Formally, computational indistinguishability is stated as follows: for all $m \leq O({\rm poly}(N))$ and any efficient algorithm\footnote{$\mathcal{A}$ is a binary measurement on $m$ copies of the Hilbert space that can be implemented with polynomial resources.} $\mathcal{A}$, one has
\begin{equation}
    \left| \mathcal{A}(\rho^{(m)}_{\mathcal{E}}) - \mathcal{A}(\rho^{(m)}_{\rm Haar}) \right| \leq o(1/{\rm poly}(N)),
    \label{eq:pseudoentanglement_def}
\end{equation}
in terms of the `moment operators' $\rho^{(m)}_{\mathcal E} = \mathbb{E}_{\psi\sim\mathcal E}[\ketbra{\psi}^{\otimes m}]$. This means that an observer with access to polynomially many copies $m$ of a given state would have no way of deciding which ensemble it came from in polynomial time. 
The constraint on entanglement is simply that, for some extensive subsystem $A$ and with high probability over the state $\psi \sim \mathcal{E}$, the entanglement entropy $S(A)$ is bounded above by a function $f(N) \leq o(N)$. By comparison, the scaling of entanglement entropy in Haar-random states (known as the Page curve~\cite{page_average_1993}) is $S(A) = \Theta(N)$ for any extensive subsystem.
The discrepancy between true and apparent entanglement is called the pseudoentanglement gap. The theoretically maximal pseudoentanglement gap is $f(N) = \omega(\log N)$ vs $\Theta(N)$, and can be saturated by random subset-phase states~\cite{aaronson_quantum_2023,giurgica-tiron_pseudorandomness_2023,jeronimo_pseudorandom_2024,bouland_public-key_2023}.

\begin{figure}
    \centering
    \includegraphics[width=\columnwidth]{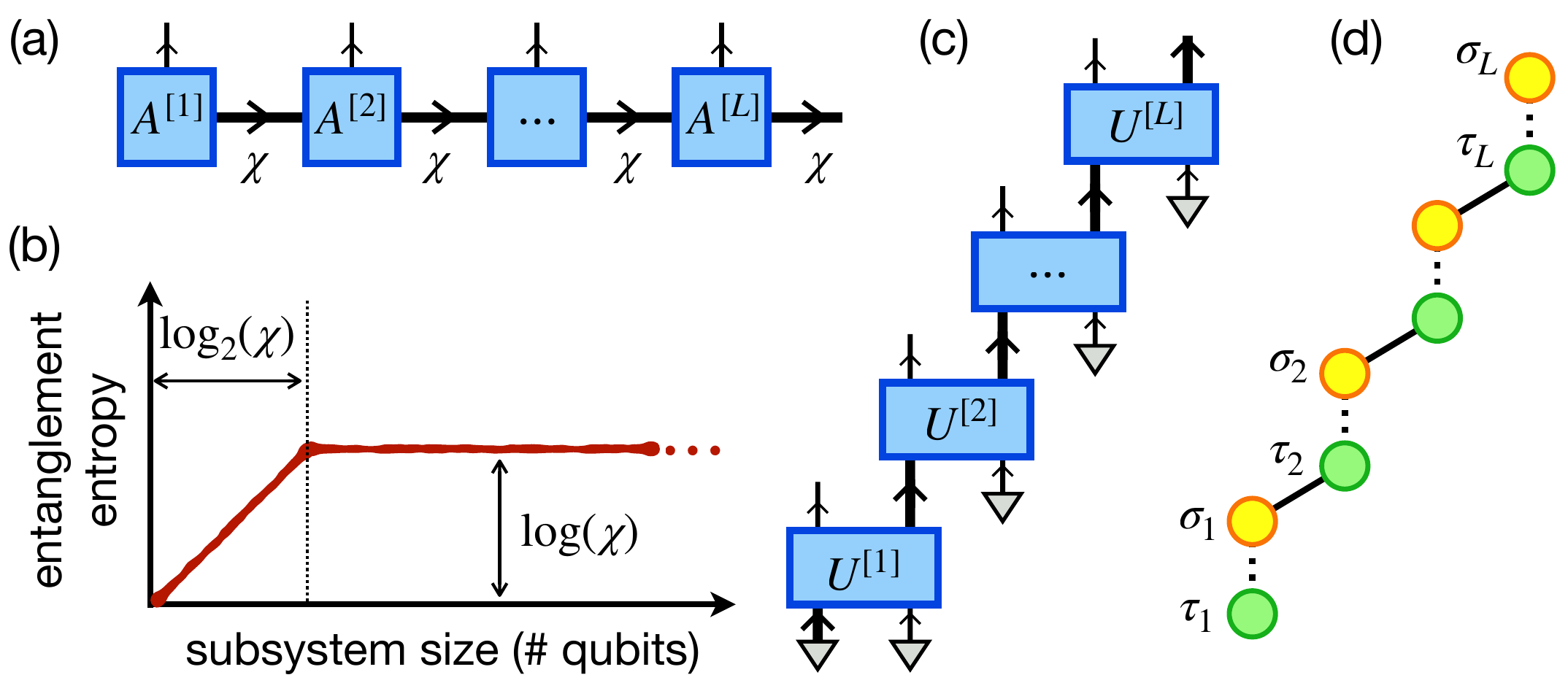}
    \caption{Construction of pseudoentangled MPS. (a) Diagrammatric representation of a MPS. Thin and thick lines represent qubits and $\chi$-state qudits, respectively; the $A^{[j]}$ tensors are isometries in the direction specified by the arrows. (b) Entanglement entropy of a typical random MPS for contiguous cuts of the qubit chain. (c) Realization of the MPS as a staircase unitary circuit. Triangles represent auxiliary qubits in the $\ket{0}$ state. (d) Partition function that controls the $m$-th moment of the random MPS ensemble. Each gate turns into two $S_m$-valued spins, with a $1D$ interaction graph.}
    \label{fig:mps}
\end{figure}

{\it Pseudoentangled matrix product states.}---We first exhibit our construction on the simplest tensor network: the one-dimensional matrix product state (MPS). A MPS, sketched in Fig.~\ref{fig:mps}(a), is a many-body state whose amplitudes are expressed as $\braket{i_1,\dots i_N}{\psi} = {\rm Tr}(A^{[1]}_{i_1} \cdots A^{[N]}_{i_N})$, with each $A^{[j]}_{i_j}$ a $\chi\times \chi$ matrix; $\chi$ is a numerical cut-off known as the {\it bond dimension}, which upper-bounds the Schmidt rank of the state about any bond on the 1D line and thus imposes an area-law, Fig.~\ref{fig:mps}(b).
A standard route to prepare MPSs in the lab is to use `staircase' circuits~\footnote{This approach is also known as ``holographic''~\cite{foss-feig_holographic_2021,anand_holographic_2023}, though not to be confused with the other usage of the word in this paper.}, where one associates $A_{i,\alpha,\beta}^{[j]} = \bra{i,\beta}U^{[j]}\ket{0,\alpha}$, with $U^{[j]}$ a unitary gate acting on a qubit and a $\chi$-state qudit. This is sketched in Fig.~\ref{fig:mps}(c). 
In this work we take $\chi = 2^\nu$, so each $U^{[j]}$ is a $(\nu+1)$-qubit gate. The staircase circuit comprises $L$ gates; for the last gate, all $\nu+1$ output qubits become part of the physical system, which in all is made of $N = L+\nu$ qubits. 
We take our unitary gates $\{ U^{[j]} \}_{j=1}^L$ to be drawn from the recently introduced `PFC ensemble'~\cite{metger_simple_2024} of pseudorandom unitaries (PRUs). 
These are the composition of a pseudorandom permutation (P), a pseudorandom binary phase function (F), and a random Clifford unitary (C)~\footnote{The sampling of random Clifford unitaries is efficient as shown in Ref.~\cite{Berg_A_2021}}, which give rise to efficiently implementable transformations that are computationally indistinguishable from Haar-random unitaries (see~\footnote{See Supplementary Information. The Supplementary Information contains Ref.~\cite{collins_weingarten_2017}.}, \cite{metger_simple_2024} for more details). 
We denote the resulting ensemble of $N$-qubit pure states by $\ppfcmps$ (pseudorandom PFC MPS).

Our first main result is that, given a sufficiently large bond dimension, this ensemble is pseudoentangled:
\begin{theorem}
If $\omega({\rm poly}(N)) \leq \chi \leq 2^{o(N)}$, the pseudorandom MPS ensemble $\ppfcmps$ is pseudoentangled.
\end{theorem}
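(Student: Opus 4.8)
The plan is to verify the three defining properties: (i) efficient preparability, (ii) computational indistinguishability from the Haar ensemble, Eq.~\eqref{eq:pseudoentanglement_def}, and (iii) sub-extensive entanglement on some cut. Properties (i) and (iii) follow directly from the staircase/MPS structure. Since $\chi=2^\nu=\omega({\rm poly}(N))$ forces $\nu+1=\omega(\log N)$ while $\chi\le 2^{o(N)}$ forces $\nu+1=o(N)$, each $U^{[j]}$ acts on super-logarithmically but sub-linearly many qubits; a PFC gate on $n$ qubits is implementable with ${\rm poly}(n)$ elementary gates and classical randomness (pseudorandom permutation and phase from a PRF, Clifford sampled efficiently~\cite{Berg_A_2021}), and the staircase uses $L=N-\nu=O(N)$ of them on $O(N)$ qubits, giving (i) under the same cryptographic assumptions as the PFC construction. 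For (iii), the MPS form makes the Schmidt rank across any contiguous bipartition at most $\chi^2$ ($\chi$ for a half-chain cut), hence $S(A)\le 2\log\chi = 2\nu = o(N)$ deterministically for every such $A$; against the Page value $S(A)=\Theta(N)$ for extensive $A$ in Haar-random states, this is a bona fide pseudoentanglement gap, $O(\nu)$ vs.\ $\Theta(N)$ with $\omega(\log N)\le\nu\le o(N)$.

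The content is (ii), which I would prove in two steps. \emph{Step one} (computational): reduce $\ppfcmps$ to the random MPS ensemble $\rhmps$ --- the same staircase built from genuinely Haar-random $(\nu+1)$-qubit gates --- by a hybrid argument over the $L$ gates, with $H_i$ having its first $i$ gates Haar and the rest PFC, so $H_0=\ppfcmps$ and $H_L=\rhmps$. An efficient algorithm on $m={\rm poly}(N)$ copies distinguishing $H_{i-1}$ from $H_i$ would yield an efficient distinguisher of one PFC unitary from one Haar unitary with $m$ parallel queries, contradicting the PRU property of the PFC ensemble~\cite{metger_simple_2024}; to keep the reduction efficient one implements the intermediate ``Haar'' gates by efficient approximate unitary $t$-designs with $t=O(m)$, which perturbs the $m$-th moment operator only negligibly. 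Summing the $L=O(N)$ hybrid gaps, $\rho^{(m)}_{\ppfcmps}$ and $\rho^{(m)}_{\rhmps}$ are computationally indistinguishable.

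\emph{Step two} (statistical): show $\|\rho^{(m)}_{\rhmps}-\rho^{(m)}_{\rm Haar}\|_1 = o(1/{\rm poly}(N))$ for every $m={\rm poly}(N)$, which together with step one and the triangle inequality yields~\eqref{eq:pseudoentanglement_def}. Both moment operators are supported on the $D=\binom{2^N+m-1}{m}$-dimensional symmetric subspace and $\rho^{(m)}_{\rm Haar}=\Pi_{\rm sym}/D$, so the rank bound $\|X\|_1\le\sqrt{{\rm rank}(X)}\,\|X\|_2$ gives $\|\rho^{(m)}_{\rhmps}-\rho^{(m)}_{\rm Haar}\|_1\le\sqrt{D\,{\rm Tr}[(\rho^{(m)}_{\rhmps})^2]-1}$, and it suffices to bound the excess purity, $D\,{\rm Tr}[(\rho^{(m)}_{\rhmps})^2]\le 1+o(1/{\rm poly}(N))$. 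I would compute ${\rm Tr}[(\rho^{(m)}_{\rhmps})^2]=\mathbb{E}_{\psi,\psi'\sim\rhmps}|\langle\psi|\psi'\rangle|^{2m}$ by averaging each Haar gate with the Weingarten formula: every gate contributes a pair of $S_m$-valued spins with a local Weingarten weight, consistency along the staircase produces nearest-neighbor ``domain-wall'' weights suppressed by powers of $\chi^{-1}$ (and of the local physical dimensions), Fig.~\ref{fig:mps}(d), and the uniform spin configuration reproduces exactly ${\rm Tr}[(\rho^{(m)}_{\rm Haar})^2]=1/D$. The remaining configurations then contribute an excess of order ${\rm poly}(m,L,N)/\chi$, which is $o(1/{\rm poly}(N))$ by the hypothesis $\chi=\omega({\rm poly}(N))$ (with $m={\rm poly}(N)\ll\sqrt\chi$), while $\chi\le 2^{o(N)}$ renders the remaining corrections negligible.

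I expect this last estimate to be the main obstacle. Bounding the Weingarten sum term by term generates spurious factorials in $m$ (there are $\sim(m!)^{\Theta(L)}$ terms, with non-trivial signs); the error must instead be resummed using group-theoretic identities --- chiefly $\sum_{\pi\in S_m}\chi^{c(\pi)}=\prod_{j=0}^{m-1}(\chi+j)$, valid to relative error $O(m^2/\chi)$ when $m\ll\sqrt\chi$, where $c(\pi)$ is the number of cycles --- together with careful control of the subleading terms of the Weingarten function, so as to certify that the excess purity is genuinely polynomial in $m$, uniformly over $m\le{\rm poly}(N)$. This is precisely the non-standard joint scaling of bond dimension and copy number in which existing random-tensor-network results do not directly apply and where the new technical work resides.
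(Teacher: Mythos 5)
Your overall architecture matches the paper's: verify (i) and (iii) from the staircase/MPS structure, then reduce (ii) to a computational step plus a statistical Weingarten calculation. But your two technical choices differ genuinely from the paper's. First, the paper does \emph{not} hybridize directly from PFC gates to Haar gates; it first replaces only the pseudorandom permutation and phase by truly random ones ($\ppfcmps\to\rpfcmps$, a pure PRF/PRP hybrid in which every intermediate distribution can be lazily simulated), and then bounds $\|\rho^{(m)}_{\rpfcmps}-\rho^{(m)}_{\rhmps}\|_{\rm tr}\le O(Nm/\sqrt{\chi})$ \emph{statistically} (Lemma~\ref{lemma:pmps_vs_rmps}), gate by gate, using the fact that a single random PFC gate $O(m/\sqrt{d})$-approximates the $m$-fold Haar twirl in trace norm together with monotonicity of trace distance under channels. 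This entirely avoids the $t$-design patch you need to make your Haar-containing hybrids efficiently preparable; your patch can be made to work but costs extra error accounting and an appeal to strong-norm approximate designs. Second, for the statistical step you bound $\|\rho^{(m)}_{\rhmps}-\rho^{(m)}_{\rm Haar}\|_1$ via rank times Hilbert--Schmidt norm and the frame potential, whereas the paper works directly with the operator difference \eqref{eq:mps_tracedist}, splits it into uniform and non-uniform permutation configurations, and uses $\|AB\|_{\rm tr}\le\|A\|_{\rm tr}\|B\|_{\rm op}$ to factor out $\rho^{(m)}_{\rm Haar}$. Your route doubles the number of replica spins and yields a bound scaling as $\sqrt{Nm^2/\chi}$ rather than $Nm^2/\chi$, which is still $o(1/{\rm poly}(N))$ for $\chi=\omega({\rm poly}(N))$, so it would suffice for the theorem if completed.

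The completion is exactly where your proposal stops, and this is the one genuine gap: you correctly diagnose that term-by-term Weingarten bounds produce factorial-in-$m$ blowup and that a resummation is needed, but you do not supply it. The paper's mechanism is concrete and worth knowing: pass to positive Boltzmann weights using the asymptotics $|\textsf{Wg}_d(\sigma)| = d^{|\sigma|-2m}\prod_i \frac{1}{\ell_i}\binom{2(\ell_i-1)}{\ell_i-1}\,[1+O(m^2/d)]$, pin one site spin to the identity (which removes the $m!$-fold ground-state degeneracy responsible for the factorial overcounting), change variables to bond spins $\beta_\ell=\alpha_i\alpha_j^{-1}$ — legitimate because the $1$D interaction graph is loop-free — and then each bond sum factorizes and is controlled by $\sum_\beta d^{|\beta|-m}f(\beta)\le \sum_\beta (d/4)^{|\beta|-m}=1+O(m^2/d)$, using $\frac{1}{\ell}\binom{2(\ell-1)}{\ell-1}\le 4^{\ell-1}$ and $\sum_\beta d^{|\beta|}=d^m[1+O(m^2/d)]$. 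Without some version of this pinning-plus-factorization step (or an equivalent resummation on your doubled replica lattice), your ``excess purity is genuinely polynomial in $m$'' claim is an assertion, not a proof, and it is the entire technical content of the theorem.
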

Efficient preparability of this ensemble is clear---it is the sequential application of $O(N)$ efficient PRUs. Limited entanglement follows from the MPS structure: cutting any bond on the line yields entanglement $\leq \log(\chi)$, which is by assumption $o(N)$. It remains to prove computational indistinguishability from the Haar-random state distribution. 

We split this task in three parts. 
(I) We remove the computational derandomization, which was only needed to enable efficient preparation. By definition of the PFC ensemble~\cite{metger_simple_2024}, the pseudorandom ensemble $\ppfcmps$ is computationally indistinguishable from its random counterpart $\rpfcmps$, where the permutation and phase functions in the PFC gates are taken to be truly random instead of pseudorandom.
(II) We prove that $\rpfcmps$ is indistinguishable from its counterpart made of Haar-random unitaries, which we term $\rhmps$:
\begin{lemma} \label{lemma:pmps_vs_rmps}
    We have $\| \rho^{(m)}_{\rpfcmps} - \rho^{(m)}_{\rhmps}\|_{\rm tr} \leq O(Nm/\sqrt{\chi})$.
\end{lemma}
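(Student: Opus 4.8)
The plan is a hybrid (telescoping) argument over the $L = N-\nu \le N$ gates of the staircase, reducing the entire bound to a single-gate statement: that the $m$-fold twirl of the random PFC ensemble on $\nu+1$ qubits is close in diamond norm to the Haar $m$-fold twirl.

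First I would write the $m$-th moment operator of either ensemble as a composition of channels built from the staircase. For step $j$ let $\mathcal{C}_j$ be the channel that adjoins $m$ copies of the fresh physical qubit in the state $\ket{0}$ and then applies the $m$-fold twirl $\mathcal{T}^{(m)}_{U^{[j]}}[X] = \mathbb{E}_{U^{[j]}}\big[(U^{[j]})^{\otimes m}\, X\, (U^{[j]\dagger})^{\otimes m}\big]$ over the $(\nu+1)$-qubit gate, acting as the identity on every other register of the $m$-copy Hilbert space. Then $\rho^{(m)}_{\mathcal{E}} = \mathcal{C}_L \circ \cdots \circ \mathcal{C}_1[\,\ketbra{0}^{\otimes m}\,]$, where $\mathcal{C}_j$ uses the random PFC twirl $\mathcal{T}^{(m)}_{\rm PFC}$ for $\mathcal{E} = \rpfcmps$ and the Haar twirl $\mathcal{T}^{(m)}_{\rm Haar}$ for $\mathcal{E} = \rhmps$. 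Since the gates are drawn independently, I would interpolate: let $\mathcal{E}_k$ be the ensemble whose first $k$ twirls are PFC and whose last $L-k$ twirls are Haar, so $\mathcal{E}_0 = \rhmps$ and $\mathcal{E}_L = \rpfcmps$.

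For one hybrid step, $\rho^{(m)}_{\mathcal{E}_k} - \rho^{(m)}_{\mathcal{E}_{k-1}}$ equals the later (all-Haar) channels applied to $\big(\mathcal{C}^{\rm PFC}_k - \mathcal{C}^{\rm Haar}_k\big)$ applied to the fixed density operator $\sigma_{k-1} := \mathcal{C}^{\rm PFC}_{k-1}\circ\cdots\circ\mathcal{C}^{\rm PFC}_1[\ketbra{0}^{\otimes m}]$. Completely positive trace-preserving maps are contractive in trace norm, and the two step-$k$ channels differ only through $\big(\mathcal{T}^{(m)}_{\rm PFC} - \mathcal{T}^{(m)}_{\rm Haar}\big)\otimes \mathrm{id}$ on the remaining registers, so since $\sigma_{k-1}$ has unit trace and the diamond norm is stable under tensoring with the identity,
\[
  \big\| \rho^{(m)}_{\mathcal{E}_k} - \rho^{(m)}_{\mathcal{E}_{k-1}} \big\|_{\rm tr} \;\le\; \big\| \mathcal{T}^{(m)}_{\rm PFC} - \mathcal{T}^{(m)}_{\rm Haar} \big\|_{\diamond}.
\]
Summing the telescoping series over $k=1,\dots,L$ and using $L\le N$ gives $\big\| \rho^{(m)}_{\rpfcmps} - \rho^{(m)}_{\rhmps} \big\|_{\rm tr} \le N\,\big\| \mathcal{T}^{(m)}_{\rm PFC} - \mathcal{T}^{(m)}_{\rm Haar} \big\|_{\diamond}$.

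What remains — and where I expect essentially all of the content to sit — is the single-gate estimate $\big\| \mathcal{T}^{(m)}_{\rm PFC} - \mathcal{T}^{(m)}_{\rm Haar} \big\|_{\diamond} = O(m/\sqrt{\chi})$ on $\nu+1$ qubits (recall $2^{\nu+1} = 2\chi$), i.e.\ the approximate-unitary-design property of the PFC ensemble. I would get this from the analysis of Metger \emph{et al.}: compare the (unnormalized) $m$-th moment operators $\mathbb{E}[U^{\otimes m}(\cdot)(U^\dagger)^{\otimes m}]$ at the operator level — the Haar one being the weighted projector onto the permutation module — and track the discrepancies introduced by the random permutation and random binary-phase layers relative to a genuine Haar average, with the final random-Clifford layer upgrading these to a true approximate design, the error being controlled by a birthday-type bound of order $m/\sqrt{\chi}$. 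Note that the stated form of the lemma requires exactly this $1/\sqrt{\chi}$ single-gate scaling: a cruder per-gate bound such as $O(m^2/\chi)$ would instead yield $O(Nm^2/\chi)$. Feeding $O(m/\sqrt\chi)$ into the displayed inequality gives $\big\| \rho^{(m)}_{\rpfcmps} - \rho^{(m)}_{\rhmps}\big\|_{\rm tr} \le O(Nm/\sqrt{\chi})$. The only remaining care is bookkeeping: treating the ancilla-insertion channels and the embedding of each gate into the growing $m$-copy Hilbert space uniformly, so that every term of the telescoping sum genuinely reduces to the same single-gate diamond-norm distance.
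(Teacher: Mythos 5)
Your proposal is correct and is essentially the paper's own argument: the paper proves the same bound by induction over the $L$ gates, which is exactly your telescoping hybrid unrolled (with the hybrids ordered oppositely), using monotonicity of the trace norm under the downstream channels and reducing each step to the single-gate statement, which it takes from Theorem 5.2 of Metger \emph{et al.}\ in precisely the extended form (twirl tensored with the identity on an auxiliary register) that your diamond-norm phrasing requires. The only cosmetic difference is that the paper states that single-gate fact for arbitrary states on $\mathcal{H}_A^{\otimes m}\otimes\mathcal{H}_B$ rather than naming the diamond norm.
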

The proof of this fact follows easily from the properties of the PFC ensemble, and is reported in~\cite{Note4}. 
(III) We finally show that the random MPS ensemble $\rhmps$ is indistinguishable from the ensemble of global Haar-random states. This is the main technical contribution of our work:
\begin{lemma} \label{lemma:rmps_vs_haar}
    We have $\| \rho^{(m)}_{\rhmps} - \rho^{(m)}_{\rm Haar}\|_{\rm tr} \leq O(Nm^2/\chi)$.
\end{lemma}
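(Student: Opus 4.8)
The plan is to compute the moment operator $\rho^{(m)}_{\rhmps}$ directly---averaging over the $L$ Haar gates one at a time---and compare it with $\rho^{(m)}_{\rm Haar} = \Pi_{\rm sym}/\binom{2^N+m-1}{m}$, the maximally mixed state on the symmetric subspace of $(\mathbb{C}^{2^N})^{\otimes m}$. Since each $\ketbra{\psi}^{\otimes m}$ is invariant under permuting the $m$ copies, so is $\rho^{(m)}_{\rhmps}$: it is supported on the symmetric subspace, where $\rho^{(m)}_{\rm Haar}$ is just the normalized identity. Hence it suffices to establish the operator inequality $\rho^{(m)}_{\rhmps} \preceq \left(1 + O(Nm^2/\chi)\right)\rho^{(m)}_{\rm Haar}$; since both states have unit trace, a short argument bounding the positive and negative parts of their difference then upgrades this to $\|\rho^{(m)}_{\rhmps} - \rho^{(m)}_{\rm Haar}\|_{\rm tr} = O(Nm^2/\chi)$. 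Equivalently, one must show that for every unit vector $\ket{\xi}$ in the symmetric subspace the ensemble overlap $\mathbb{E}_{\psi\sim\rhmps}\,|\braket{\xi}{\psi^{\otimes m}}|^2$ exceeds the Haar value $1/\binom{2^N+m-1}{m}$ by at most a factor $1 + O(Nm^2/\chi)$.

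To evaluate $\rho^{(m)}_{\rhmps}$, I would apply the Weingarten formula gate by gate. Each $U^{[j]}$ acts on dimension $d=2\chi$ and enters $m$ times as $U$ and $m$ times as $\bar U$, so its Haar average attaches to gate $j$ a pair of $S_m$-valued ``spins'' joined by a Weingarten weight ${\rm Wg}(\cdot,2\chi)$, exactly the structure of Fig.~\ref{fig:mps}(d). Contracting the network, $\rho^{(m)}_{\rhmps}$ becomes a sum over $S_m$-spin configurations of a scalar Boltzmann weight $W$ times a unitary operator (a tensor product of $m$-copy permutation operators on the physical qubits). Because the staircase is one-dimensional, the configurations live on a 1D chain, and $W$ factorizes into $|{\rm Wg}(\cdot,2\chi)|$ at each gate, $\chi^{c(\cdot)}$ (with $c(\cdot)$ the number of cycles) from contracting permutations across each $\chi$-dimensional bond, and trivial factors at the $\ket{0}$-initialized and open-physical legs that terminate the chain. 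Since permutation operators are unitary, $\|\rho^{(m)}_{\rhmps}\|_\infty \leq \sum_{\rm config}|W|$, and (given that $\rho^{(m)}_{\rhmps}$ is supported on the symmetric subspace) this operator-norm bound implies the desired operator inequality; so the task reduces to bounding this ``partition function'' by $\left(1+O(Nm^2/\chi)\right)/\binom{2^N+m-1}{m}$.

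The sum is dominated by the ``ferromagnetic'' configurations in which all spins equal a common $\pi\in S_m$; each contributes $\chi^{m(L-1)}\,{\rm Wg}(e,2\chi)^L$, and summing over the $m!$ choices of $\pi$ reproduces $1/\binom{2^N+m-1}{m}$ up to a relative correction $O(Nm^2/\chi)$ coming from ${\rm Wg}(e,2\chi)=[2\chi(2\chi+1)\cdots(2\chi+m-1)]^{-1}$ versus $(2\chi)^{-m}$ (the $2^{-N}$-type corrections hidden in the binomial are negligible since $\chi\leq 2^{o(N)}$). Every other configuration contains a ``domain wall''---a bond across which the output spin of one gate and the input spin of the next disagree by some $\rho\neq e$---which replaces $\chi^m$ by $\chi^{c(\rho)}$ and hence costs at least one power of $\chi$; a disagreement between a gate's two spins is at least as suppressed, via $|{\rm Wg}(\sigma,d)|\lesssim |{\rm Mob}(\sigma)|\,d^{-m-|\sigma|}$ with Cayley distance $|\sigma|=m-c(\sigma)$. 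The leading correction is a single transposition-type domain wall, and summing over its $O(L)=O(N)$ positions and over permutation jumps---using $\sum_{\rho\neq e}\chi^{-|\rho|}=\prod_{j=1}^{m-1}(1+j/\chi)-1=O(m^2/\chi)$---reproduces precisely the advertised $O(Nm^2/\chi)$.

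The main obstacle is to show that everything else---configurations with several domain walls, large permutation jumps, or within-gate Weingarten excitations---resums to $O\!\left((Nm^2/\chi)^2\right)$ rather than proliferating, despite the $(m!)^{O(N)}$ configurations and the exponentially large Möbius coefficients that can appear in ${\rm Wg}$. This is a convergent cluster-expansion (equivalently transfer-matrix-gap) estimate: the transfer matrix on the bond space has an $m!$-fold nearly degenerate leading sector (the ferromagnetic configurations) separated from the rest by a relative gap $O(m^2/\chi)$, so over a length-$L\leq N$ chain the partition function departs from its ferromagnetic value by $O(Nm^2/\chi)$. One may assume $Nm^2/\chi<1$, as otherwise the bound is vacuous; this forces $2\chi\gg m^2$, the regime where the needed Weingarten estimates hold and the $4^{|\sigma|}$-type combinatorial prefactors are overwhelmed by the $(2\chi)^{-|\sigma|}$ per-excitation suppression. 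What remains is bookkeeping: estimating Weingarten magnitudes, counting permutations at each Cayley distance, and treating the two boundaries of the staircase separately.
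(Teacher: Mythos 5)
Your top-level framing is a valid alternative to the paper's: instead of bounding the trace norm of the difference directly, you bound $\|\rho^{(m)}_{\rhmps}\|_\infty$ on the symmetric subspace and upgrade the operator inequality $\rho^{(m)}_{\rhmps}\preceq(1+\epsilon)\rho^{(m)}_{\rm Haar}$ to a trace-distance bound using equality of traces; this avoids the paper's split into uniform and non-uniform spin configurations and its H\"older step $\|AB\|_{\rm tr}\le\|A\|_{\rm tr}\|B\|_{\rm op}$. The partition function you arrive at is the same as in Eq.~\eqref{eq:mps_tracedist}, and your per-bond estimate $\sum_{\rho\ne e}\chi^{-|\rho|}=O(m^2/\chi)$ is exactly the bound $\sum_{\beta\in S_m}d^{|\beta|}=d^m[1+O(m^2/d)]$ that the paper proves in the Supplementary Information.

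The genuine gap is the step you defer as ``bookkeeping'': resumming all configurations with multiple domain walls, large permutation jumps, and within-gate Weingarten excitations. That resummation is the actual content of the lemma, and your proposed route --- a convergent cluster expansion, or a spectral-gap estimate for an $m!\times m!$ transfer matrix with an $m!$-fold nearly degenerate leading block --- is both unexecuted and heavier than necessary. The paper closes this in essentially one line: every Boltzmann weight depends only on the relative permutations $\alpha_i\alpha_j^{-1}$ across edges, the interaction graph of the staircase is a path (a tree), and after pinning one spin using the global symmetry $\alpha_i\mapsto\alpha_i\pi$ (which produces the overall $m!$), the sum over \emph{all} remaining configurations factorizes \emph{exactly} as $\prod_{\ell\in{\rm bonds}}\big(\sum_\beta w_\ell(\beta)\big)$. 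Each factor is $1+O(m^2/\chi)$ --- for solid bonds by the sum above, for dashed bonds by $\sum_\beta d^m|\textsf{Wg}_d(\beta)|=1+O(m^2/d)$ --- so the whole partition function is $m!\,D^{-m}[1+O(m^2/\chi)]^{O(L)}=m!\,D^{-m}[1+O(Nm^2/\chi)]$, with no expansion in the number of domain walls needed; all orders are handled at once. Without this observation (or an actually executed gap/cluster estimate) your argument is incomplete at its crucial step. Two minor points: your closed form $\textsf{Wg}_{2\chi}(e)=[2\chi(2\chi+1)\cdots(2\chi+m-1)]^{-1}$ is not an identity (for $m=2$, $\textsf{Wg}_d(e)=1/(d^2-1)$); it is $\sum_\tau\textsf{Wg}_d(\tau)$ that equals $(d-1)!/(d+m-1)!$ --- harmless here since both are $(2\chi)^{-m}[1+O(m^2/\chi)]$. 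And your remark that one may assume $Nm^2/\chi<1$ without loss of generality is correct and is implicitly used by the paper when expanding $[1+O(m^2/\chi)]^{O(L)}-1$.
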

\begin{proof}
Let us first recall two facts about the Haar measure: 
\begin{itemize}
\item[(i)] the $m$-th moment of the Haar measure over states is given by $\rho^{(m)}_{\rm Haar} = f_{q,m} \sum_{\sigma\in S_m} \hat{\sigma}$, with $\hat{\sigma}$ the replica permutation operator associated to permutation $\sigma\in S_m$, $q$ the Hilbert space dimension, and $f_{q,m} = (q-1)!/(q+m-1)!$; 
\item[(ii)] the $m$-th twirling channel for the Haar measure is given by $\Phi^{(m)}_{\rm Haar}(O) = \mathbb{E}_{U\sim {\rm Haar}} [U^{\otimes m} O (U^\dagger)^{\otimes m}] = \sum_{\sigma,\tau\in S_m} \textsf{Wg}_q(\sigma\tau^{-1}) {\rm Tr}(\hat{\tau}^\dagger O) \hat{\sigma}$, where $\textsf{Wg}_q(\sigma\tau^{-1})$ is the Weingarten function, see~\cite{Note4}.
\end{itemize}
Carrying out the Haar averages with these tools, we see that the $m$-copy distinguishability between $\rhmps$ and $\mathcal{E}_{\rm Haar}$ reads
\begin{equation}
    \Delta_m \equiv \Big\| f_{D,m} \sum_{\sigma}\hat{\sigma}^{\otimes L} - \sum_{\{\sigma_i,\tau_i\}}e^{-E[\{\sigma_i,\tau_i\}]} \bigotimes_{i=1}^L \hat{\sigma}_i \Big\|_{\rm tr} 
    \label{eq:mps_tracedist}
\end{equation}
where $D = 2^N = 2^L \chi$ is the physical Hilbert space dimension, each $\hat{\sigma}$ denotes a replica permutation for a single qubit ($i<L$) or $2\chi$-state qudit ($i=L$), and $e^{-E[\{\sigma_i,\tau_i\}]}$ is a ``Boltzmann weight'' (not necessarily positive) of a configuration of $S_m$-valued spins on a line, sketched in Fig.~\ref{fig:mps}(d). This way of writing the operator is reminiscent of a partition function~\cite{hunter-jones_unitary_2019} and will be helpful in our analysis.

Using the triangle inequality, we bound the distinguishability measure as $\Delta_m \leq \Delta_m^{\rm u.} + \Delta_m^{\rm n.u.}$, in terms of uniform and non-uniform spin configurations: 
\begin{align}
    \Delta_m^{\rm u.}
    & = \Big\| \sum_{\sigma\in S_m} \left[ f_{D,m}- e^{-E[\{\sigma,\sigma\}]}\right] \hat{\sigma}^{\otimes L} \Big\|_{\rm tr}, \label{eq:dist_unif} \\
    \Delta_m^{\rm n.u.} & = \Big\| \sum_{\{\sigma_i,\tau_i\}\text{ n.u.}} e^{-E[\{\sigma_i,\tau_i\}]} \bigotimes_{i=1}^L \hat{\sigma}_i \Big\|_{\rm tr}, \label{eq:dist_nonunif}
\end{align}
with `n.u.' denoting a restriction to non-uniform configurations (where not all $\sigma_i$, $\tau_i$ coincide).

We first show that $\Delta_m^{\rm u.}$ is small.
The Boltzmann weight of the uniform spin configuration is $e^{-E[\{\sigma,\sigma\}]} = \chi^{m(L-1)} \textsf{Wg}_{2\chi}(e)^L$ (independent of $\sigma$), with $e$ the identity permutation. We show in \cite{Note4} that $\textsf{Wg}_{2\chi}(e) = (2\chi)^{-m}[1+O(m^2/\chi)]$; therefore $e^{-E[\{\sigma,\sigma\}]} = D^{-m} [ 1+O(m^2/\chi)]$. At the same time, by the `birthday asymptotics' we have $f_{D,m} = D^{-m} [1+O(m^2/\chi)]$, so overall 
\begin{equation} \label{eq:delta_unif_small}
    \Delta_m^{\rm u.} \leq \|\rho_{\rm Haar}^{(m)} \|_{\rm tr} \left|1-\frac{e^{-E[\{e,e\}]}}{f_{D,m}}\right| \leq O \left(\frac{m^2}{\chi}\right).
\end{equation}

Next, we show that $\Delta_m^{\rm n.u.}$, Eq.~\eqref{eq:dist_nonunif}, is also small. 
We write $\bigotimes_{i=1}^L \hat{\sigma}_i = \hat{\sigma}_1^{\otimes L} \bigotimes_{i=1}^L \hat{\sigma}_i'$, where $\sigma_1' = e$ (identity permutation). Then, using the inequality $\| AB\|_{\rm tr} \leq \|A\|_{\rm tr} \|B\|_{\rm op}$ to factor out the sum over $\sigma_1$, we obtain 
\begin{align}
    \Delta_m^{\rm n.u.} 
    & \leq \|\rho_{\rm Haar}^{(m)} \|_{\rm tr} \Bigg \| \sum_{\{\sigma_i',\tau_i\}\text{ n.u.}} \frac{e^{-E[\{\sigma_i',\tau_i\}]}}{f_{D,m}} \bigotimes_{i=1}^L \hat{\sigma}_i' \Bigg \|_{\rm op}.
\end{align}
Using the triangle inequality and the fact that $\|\hat{\sigma}\|_{\rm op} = 1$ (unitarity of the replica permutations), we have
\begin{align}
        \Delta_m^{\rm n.u.} & \leq 2\sum_{\{\sigma_i',\tau_i\}\text{ n.u.}} e^{-(E^+[\{\sigma_i',\tau_i\}] - E^+[\{e,e\}])}. \label{eq:delta_nonunif_partition}
\end{align}
Here we defined positive Boltzmann weights $e^{-E^+[\{\sigma_i',\tau_i\}]} \equiv | e^{-E[\{\sigma_i',\tau_i\}]}|$, and used the fact that $f_{D,m} \simeq e^{-E^+[\{e,e\}]}\geq e^{-E^+[\{e,e\}]}/2$ (as seen in the discussion of $\Delta_m^{\rm u.}$) to subtract the ground state energy $E^+[\{e,e\}]$. 

Having pinned the $\sigma_1'$ site spin, we are free to move to bond spin variables: $\beta_\ell = \alpha_i \alpha_j^{-1}$ for each bond $\ell = (i,j)$. Each bond sum has a unit contribution from the bond ground state, $\beta_\ell = e$, while excited states $\beta_\ell \neq e$ are suppressed by powers of $1/\chi$. We show in \cite{Note4} that each such sum yields $1+O(m^2/\chi)$. Overall, $\Delta_m^{\rm n.u.} \leq 2 |[1+O(m^2/\chi)]^{2L}-1| \leq O(Nm^2/\chi)$.
\end{proof}

Eq.~\eqref{eq:delta_nonunif_partition} has an intuitive physical interpretation: the right hand side represents thermal fluctuations of a genuine lattice magnet (with real energy levels); pinning the value of $\sigma_1 = e$ removes the giant $m!$ degeneracy of the ground state, and restricting the sum to non-uniform configurations $\{\sigma_i',\tau_i\}$ further removes the (now unique) ground state $\sigma_i' = \tau_i' = e$, so that the remaining sum quantifies thermal fluctuations. Normally in 1D systems these would be large due to the lack of long range order at any finite temperature. However, the effective temperature of this magnet, controlled by the bond dimension $\chi$ as $1/\log(\chi)$, is being taken to zero as $o(1/\log(N))$ with increasing system size [since $\chi = \omega({\rm poly}(N))$]. In this regime long range order is possible even in 1D. 

This result establishes that an isometric MPS made of random unitaries is a pseudorandom state if the bond dimension $\chi$ is superpolynomial in $N$. Further, it is clear that the entanglement entropy is bounded above by $|\partial A| \log_2(\chi)$ for any subsystem $A$, where $\partial A$ is the boundary of $A$. For typical realizations, the entanglement entropy of a segment $A$ of length $\ell$ scales linearly in $\ell$ up to $\ell \simeq \nu$, at which point it saturates. This saturation value is constant in $\ell$ but grows with total system size as $\omega(\log N)$, a behavior we may call `pseudo-area-law', sketched in Fig.~\ref{fig:mps}(b).

\begin{figure}
    \centering
    \includegraphics[width=\columnwidth]{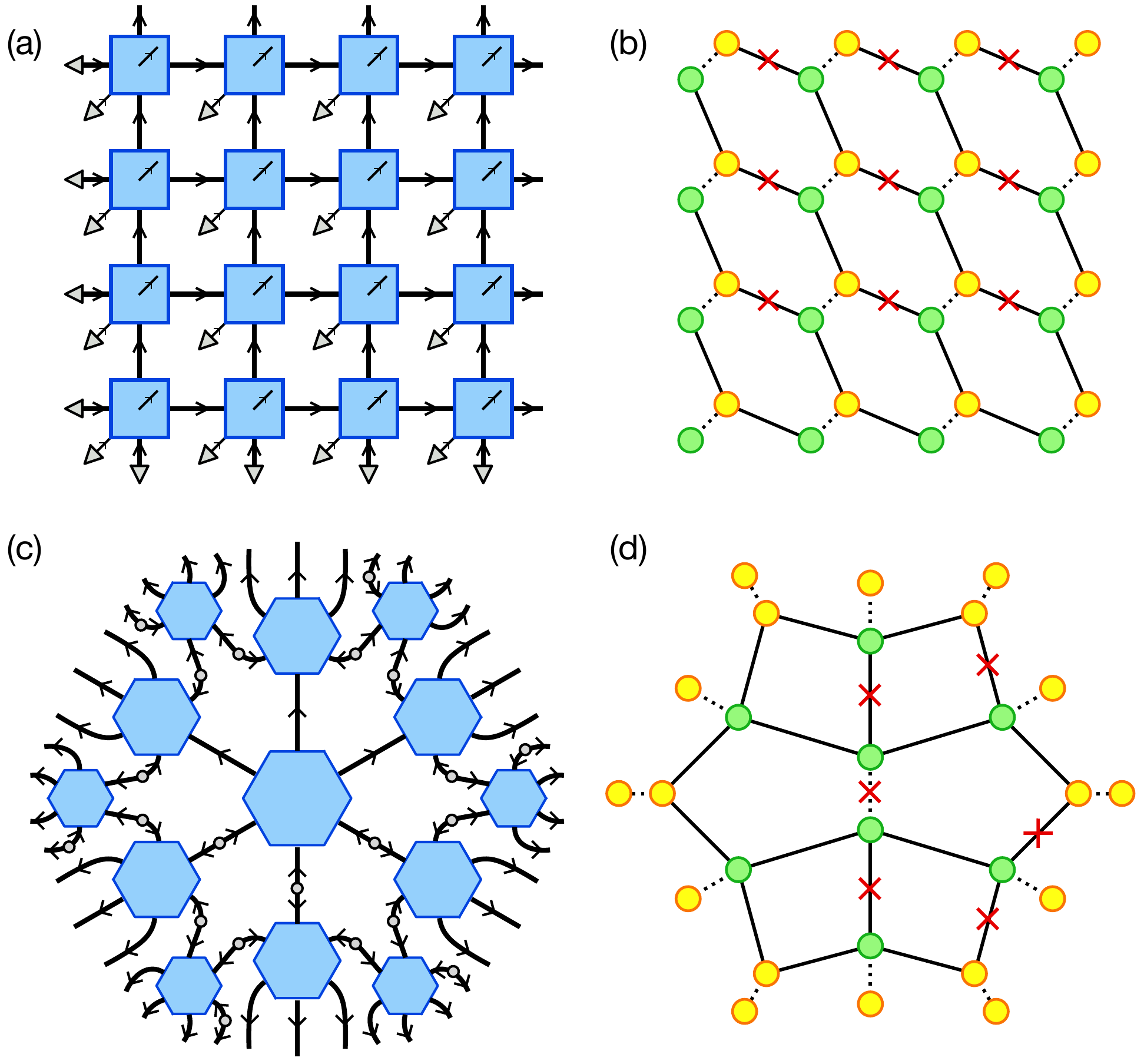}
    \caption{Construction of pseudoentangled states bsed on arbitrary isometric tensor networks. (a-c) Isometric realization of a PEPS on a $2D$ square lattice and a holographic tensor network on a $\{6,4\}$ tiling of the hyperbolic plane. Arrows denote the direction of unitarity. Gray triangles in (a) are $\ket{0}$ states, gray circles in (c) are Bell pair states. Thin lines (in (a) only) are qubits, thick lines are $\chi$-state qudits. (b-d) Partition functions obtained from averaging over Haar-random unitaries. Green and yellow sites represent ``bulk'' spins ($\tau$) and ``boundary'' spins ($\sigma$), respectively. Solid bonds give overlaps between permutations, dashed bonds give Weingarten functions. Bonds marked by a $\times$ can be dropped to obtain a spanning tree of the interaction graph, used in our proof of indistinguishability. }
    \label{fig:peps}
\end{figure}

{\it Pseudoentanglement in general isometric tensor networks.}---This construction is straightforwardly generalizable to other geometries beyond one dimension, e.g. to PEPS in 2D~\cite{verstraete_criticality_2006,cirac_matrix_2021}, which can also be realized as isometric tensor networks~\cite{zaletel_isometric_2020} subject to some constraints on their bond dimensions, see Fig.~\ref{fig:peps}(a-b). 
We generalize the proof for MPS given above to arbitray isometric tensor networks in \cite{Note4}; here we summarize the main idea.

The only element of the MPS proof that needs a nontrivial adjustment is the treatment of the partition function following Eq.~\eqref{eq:delta_nonunif_partition}, where the change of variables from site spins to bond spins exploited the loop-free $1D$ geometry of the MPS. 
However, since the Boltzmann weights $e^{-(E^+[\{\sigma_i',\tau_i\}] - E^+[\{e,e\}])}$ can be written as products of bond terms that are all individually $\leq 1$ (see \cite{Note4} for more details), the partition function is non-decreasing under removal of any given bond (i.e., replacing a bond term by $1$). 
Thus we can upper-bound the partition function by dropping edges in the graph until we obtain a spanning tree, as sketched in Fig.~\ref{fig:peps}(c-d); at this point we can change variables to bond spins (since the tree is free of loops) and conclude the proof as before. 

Therefore our construction can realize pseudo-area-law entangled states on $d$-dimensional Euclidean lattices, where the entropy of contiguous subsystems obeys $S(A) \simeq \nu |\partial A|$, with $\nu = \log_2(\chi) = \omega(\log (N))$. Such states correctly track the Page curve as long as $A$ is a ball of radius $O(\nu^{1/d})$ in the lattice, then switch to the area-law scaling.

\begin{figure}
    \centering
    \includegraphics[width=\columnwidth]{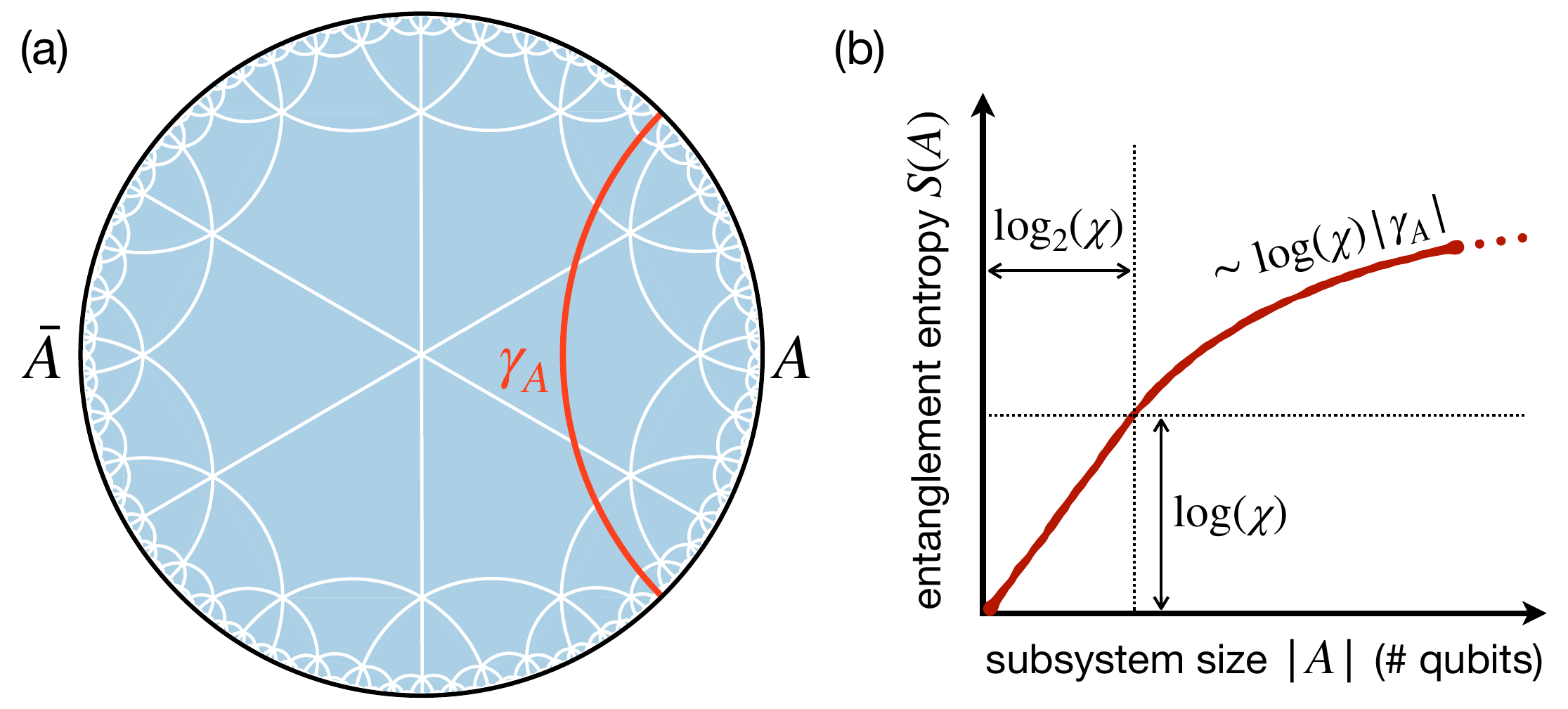}
    \caption{(a) Schematic of minimum cut for the entanglement of a boundary interval $A$ in a holographic tensor network. (b) Scaling of the entanglement entropy in holographic pseudoentangled states (here each $\chi$-state qudit in the output state is split into $\nu = \log_2(\chi)$ qubits). A Page-like scaling $\sim |A| \log(2)$ extends up to $|A| \sim \log(\chi) = \omega(\log N)$, then the minimum cut $|\gamma_A|$ starts deviating from the volume $|A|$; in hyperbolic tilings one has $|\gamma_A| \sim \log|A|$ on large length scales.}
    \label{fig:mincut}
\end{figure}

{\it Holographic tensor networks.}---An important application of our construction is to tensor network models of holography, where a quantum system is viewed as the ``boundary'' of a higher-dimensional ``bulk'' space, and its entanglement structure reflects the geometry of the bulk. 
The existence of holographic pseudoentangled states, capable of hiding the bulk geometry from observers in the boundary quantum system, may have implications for the complexity of the holographic ``dictionary'' (the mapping between observables in the bulk and boundary theories)~\cite{bouland_computational_2019,aaronson_quantum_2023}. 
Indeed, the dictionary is conjectured~\cite{susskind_complexity_2016} to relate an observable that is easy to measure (wormhole volume in the bulk gravity theory) to one that is extremely hard to measure (circuit complexity in the boundary quantum theory), seemingly violating standard complexity-theoretic assumptions. 
One proposed resolution is that the dictionary itself may be highly complex~\cite{bouland_computational_2019}. There are two stages where this complexity might arise for an observer in the boundary theory: learning the boundary entanglement structure, and using it to reconstruct the bulk geometry. A holographic pseudoentangled state would ensure hardness of the former step, potentially allowing the latter step to be efficient (as was recently argued in a toy model~\cite{aaronson_discrete_2023}). For this reason, Ref.~\cite{aaronson_quantum_2023} asked whether it is possible to construct holographic pseudoentanglement. 

Holographic tensor networks have emerged as useful toy models of the AdS/CFT correspondence and its links to quantum error correction~\cite{pastawski_holographic_2015,swingle_constructing_2012,qi_exact_2013,hayden_holographic_2016,qi_holographic_2017,jahn_holographic_2021}. They generally feature tensors arranged on a tiling of $2D$ hyperbolic space terminating with ``dangling'' legs on the boundary which represents the physical system of interest. This structure is very well suited to our construction, since it encodes the bulk isometrically into the boundary; i.e., the tensor network may be viewed as a circuit propagating information unitarily outward from the center of the hyperbolic space toward its boundary.

Here we focus for concreteness on the $\{6,4\}$ tiling of the hyperbolic plane with 6-leg tensors (3-qudit unitary gates), without bulk degrees of freedom. This is sketched in Fig.~\ref{fig:peps}(b). Placing a PRU of dimension $\chi^3$ on each hexagon defines a state ensemble $\holo$. 
Our general proof extends to this ensemble and is sufficient to show pseudoentanglement: $\holo$ is computationally indistinguishable from Haar-random, while the entanglement entropy is bounded away from volume-law. The latter property follows, as in previous examples, from the limit on the Schmidt rank of the state imposed by the tensor network structure. Namely, we have $S(A) \leq \min_{\gamma_A} |\gamma_A| \log(\chi)$, where $\gamma_A$ is a curve that separates $A$ from $\bar{A}$ in the network, and $|\gamma_A|$ is the number of bonds it cuts; see Fig.~\ref{fig:mincut}. Generally on hyperbolic tilings one has $\min_{\gamma_A}|\gamma_A| \sim \log|A|$ for contiguous $A$, and thus sub-volume-law scaling, as claimed. 

However this does not yet establish holographic entanglement, as it is only an upper bound. A key property of holographic states (and random circuits or tensor networks more generally~\cite{nahum_quantum_2017,zhou_emergent_2019,li_statistical_2021,li_entanglement_2023}) is that their entropy approximately saturates this minimum-cut upper bound. Precisely, the Ryu-Takayanagi formula~\cite{ryu_holographic_2006,ryu_aspects_2006} states that
\begin{equation}
    S(A) = \frac{\min_{\gamma_A} |\gamma_A|}{4G} \label{eq:rt_formula}
\end{equation}
where $G$ is Newton's constant in the bulk gravitational theory. 
We can prove that our ensemble $\holo$ is indeed holographic in this sense.
In \cite{Note4} we prove a lower bound on the average entropy which, together with the aforementioned upper bound form the Schmidt rank, yields the following:
\begin{theorem}\label{thm:holo_RT}
    For any subsystem $A$, the holographic pseudorandom tensor network ensemble $\holo$ obeys
    \begin{equation}
        \mathbb{E}_{\psi\sim \holo} [S(A)] = |\gamma_A| \log(\chi)[1+o(1)] + O(N/\chi). \label{eq:thm_rt}
    \end{equation}
\end{theorem}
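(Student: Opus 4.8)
The plan is to establish Eq.~\eqref{eq:thm_rt} by proving a matching lower bound on $\mathbb{E}[S(A)]$ to complement the Schmidt-rank upper bound $S(A) \leq |\gamma_A|\log(\chi)$ already stated. The natural route is to work with the second Rényi entropy $S_2(A) = -\log \mathbb{E}[\operatorname{Tr}\rho_A^2]/\mathbb{E}[\operatorname{Tr}\rho_A]$-type quantities (more precisely the annealed average $-\log \mathbb{E}_\psi \operatorname{Tr}\rho_A^2$), since $S(A) \geq S_2(A)$ and $S_2$ is controlled by the $m=2$ moment operator, which is exactly the regime where Lemmas~\ref{lemma:pmps_vs_rmps} and~\ref{lemma:rmps_vs_haar} and their tensor-network generalizations apply. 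So the first step is: write $\operatorname{Tr}\rho_A^2$ for a state drawn from $\holo$ as a two-replica tensor-network contraction, average over the Haar-random unitaries on each hexagon using fact (ii) of Lemma~\ref{lemma:rmps_vs_haar}, and obtain a partition function of $S_2$-valued spins (each spin being $e$ or the swap $\mathrm{SWAP}$), with a boundary condition that pins spins to $\mathrm{SWAP}$ on the dangling legs in $A$ and to $e$ on the dangling legs in $\bar A$.

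The second step is the standard min-cut/domain-wall analysis of this Ising-like partition function: the dominant configurations are the two ``uniform-up-to-a-domain-wall'' configurations, whose free-energy cost is $|\gamma_A|\log\chi$ for the cheapest domain wall $\gamma_A$ separating $A$ from $\bar A$ in the bulk graph. Here I would reuse the spanning-tree / bond-by-bond bound sketched after Eq.~\eqref{eq:delta_nonunif_partition}: every non-minimal-cut configuration is suppressed by at least one extra factor of $O(m^2/\chi)=O(1/\chi)$ per excess bond relative to the optimal domain wall, so $\mathbb{E}[\operatorname{Tr}\rho_A^2] = (d_A' + d_{\bar A}') \cdot \chi^{-|\gamma_A|}[1+O(\operatorname{poly}(N)/\chi)]$ where the two leading terms come from the two orientations of the minimal domain wall (and $d_A',d_{\bar A}'$ are $O(1)$ multiplicities absorbing subleading cut degeneracies and the $f_{q,m}$ prefactors). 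This yields $S_2(A) \geq |\gamma_A|\log\chi - O(1) - O(N/\chi)$, hence the lower bound half of Eq.~\eqref{eq:thm_rt} after absorbing the $O(1)$ into the $[1+o(1)]$ and noting $|\gamma_A|\log\chi = \omega(\log N)$ so the additive $O(1)$ is relatively negligible. Combined with the upper bound $\mathbb{E}[S(A)] \leq |\gamma_A|\log\chi$ we get the claim; the $O(N/\chi)$ error simply propagates from replacing $\rpfcmps$-type ensembles by Haar-random ones (Lemmas~\ref{lemma:pmps_vs_rmps}, \ref{lemma:rmps_vs_haar}) inside the entropy estimate, using that entropy is a bounded ($\leq N\log 2$) function of the state so a trace-distance $O(Nm^2/\chi)$ on moment operators translates to an $O(N^2/\chi)$-type correction (one should check the exact power, but it is $O(\mathrm{poly}(N)/\chi)$, consistent with the stated $O(N/\chi)$ up to the precise constant/power the authors track).

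The main obstacle I anticipate is the gap between the \emph{annealed} Rényi bound and the \emph{quenched} von Neumann entropy: $\mathbb{E}[S(A)] \geq \mathbb{E}[S_2(A)] \geq -\log\mathbb{E}[\operatorname{Tr}\rho_A^2]$ requires care because the last inequality (Jensen) goes the ``wrong way'' for converting annealed to quenched, so one genuinely needs the annealed second moment to be sharply concentrated—equivalently, one needs control of $\mathbb{E}[(\operatorname{Tr}\rho_A^2)^2]$ or of higher replica numbers to rule out rare low-entropy outliers dragging the annealed average down. Handling this is precisely why the lemmas are stated for general $m = O(\operatorname{poly}(N))$ rather than just $m=2$: I would use a few more replicas (or a Chebyshev/second-moment argument on $\operatorname{Tr}\rho_A^2$, whose variance is again a bulk partition function suppressed by $O(\operatorname{poly}(N)/\chi)$) to show $\operatorname{Tr}\rho_A^2$ concentrates around its mean $\chi^{-|\gamma_A|}$ with high probability, and hence $S_2(A) \geq |\gamma_A|\log\chi - o(|\gamma_A|\log\chi)$ not just on average but typically, which then gives the bound on $\mathbb{E}[S(A)]$. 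A secondary subtlety is the geometry of the hyperbolic $\{6,4\}$ tiling: one must verify that the min-cut $\gamma_A$ in the discrete tensor-network graph behaves as expected (e.g. that the domain-wall free energy really is $|\gamma_A|\log\chi$ with the combinatorial degeneracy of near-optimal cuts being at most subexponential in $|\gamma_A|$, so it does not overwhelm the $\chi^{-|\gamma_A|}$ suppression), but this is a property of the fixed tiling and should follow from its bounded local structure rather than requiring new probabilistic input.
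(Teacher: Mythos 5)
Your overall strategy---lower-bounding $\mathbb{E}[S(A)]$ by the annealed second R\'enyi entropy $-\log\mathbb{E}[\operatorname{Tr}\rho_A^2]$, evaluating the two-replica average as an $S_2$-valued spin model with pinned boundary conditions, and extracting the minimal cut against the Schmidt-rank upper bound---is exactly the paper's route. But your mechanism for transferring the Haar-gate calculation to the actual PFC ensemble does not work, and this is a genuine gap. The average entropy is not a linear functional of the moment operator, so a trace-distance bound on $\rho^{(m)}$ does not translate into a bound on the difference of average entropies via ``entropy is bounded by $N\log 2$'' (Fannes-type continuity applies to individual states, not to ensemble moments). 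One can salvage a linear-functional version for the purity, since $\mathbb{E}[\operatorname{Tr}\rho_A^2]=\operatorname{Tr}[(\mathrm{SWAP}_A\otimes I_{\bar A})\,\rho^{(2)}]$, but then the additive error $O(n_U m/\sqrt{d_{\rm min}})$ from the lemmas is only polynomially small in $\chi$, whereas the signal you must preserve is $\chi^{-|\gamma_A|}$, exponentially small in the cut length: for any $|\gamma_A|\geq 2$ the transfer error swamps the signal and the lower bound degenerates. The paper's resolution is a one-line but essential observation: the PFC ensemble contains a genuinely random Clifford and is therefore an \emph{exact} 2-design, so the second-moment (purity) calculation for $\holo$ coincides exactly with the Haar-gate one, with zero transfer error. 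Without this (or an equally sharp multiplicative, rather than additive, comparison) your argument does not close.

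Separately, your ``main obstacle'' is a misdiagnosis: the chain $-\log\mathbb{E}[\operatorname{Tr}\rho_A^2]\leq\mathbb{E}[S_2(A)]\leq\mathbb{E}[S(A)]$ is valid as written, because convexity of $-\log$ makes the annealed quantity a \emph{lower} bound on the quenched average, which is precisely the direction needed, and the theorem concerns only the average entropy rather than a high-probability statement. No concentration of $\operatorname{Tr}\rho_A^2$, higher replicas, or Chebyshev argument is required; the paper closes the lower bound with $m=2$ alone. (A smaller inaccuracy: the correction from the degeneracy of near-minimal cuts is $-|\gamma_A|\log n_b$ with $n_b=O(N)$ bonds, not $O(1)$; this is exactly the origin of the $[1+o(1)]$ factor in Eq.~\eqref{eq:thm_rt}, via $\log n_b/\log\chi=o(1)$.)
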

Up to terms that vanish in the thermodynamic limit, this is a RT formula with ``Newton's constant'' $G = 4/\log(\chi) = o(1/\log(N))$. Our proof follows the standard approach for holographic random tensor networks~\cite{hayden_holographic_2016}; the reduction of our PRU-based construction to a random calculation is made possible by the fact that the PFC ensemble forms an {\it exact} 2-design (due to the random Clifford part). 

Combining Theorem~\ref{thm:holo_RT} with the proven computational indistinguishability~\cite{Note4}, we obtain the existence of holographic pseudoentangled states, answering the question posed in Ref.~\cite{aaronson_quantum_2023}. 

{\it Discussion.}---We have constructed new models of pseudoentangled quantum state ensembles by combining pseudorandom unitaries and isometric tensor networks. Compared to prior constructions of pseudoentanglement, this approach provides more flexibility in the choice of entanglement structures that may be computationally hidden. These include pseudo-area-law scaling in any dimension as well as holographic entanglement which obeys a Ryu-Takayanagi formula, Eq.~\eqref{eq:thm_rt}. It would be interesting to extend our construction to more general and non-isometric holographic codes~\cite{akers_black_2022,dewolfe_non-isometric_2023}. More broadly, the implications of our construction for holography remain to be explored; it is worth emphasizing that tensor networks are a very simplistic toy model of holography and further work is needed to extend these results toward more realistic models. 

Given the prominent role of tensor networks in many-body physics (from exact representations of local Hamiltonian ground states to variational ans\"{a}tze for time evolution to models of holography) our results facilitate the integration of computational pseudorandomness in diverse physical settings. This paves the way for applications to the hardness of learning various properties of structured many-body states, such as Hamiltonian eigenstates or states prepared by specific types of non-thermalizing quantum dynamics. We leave these interesting directions to future work. 

{\it Note added.}---While working on this manuscript we became aware of related work by Akers, Bouland, Chen, Kohler, Metger and Vazirani~\cite{akers_holographicpseudoentanglement_2024} that introduces a different construction of holographic pseudoentangled states.
Additionally, during or after completion of this manuscript, several related works appeared on arXiv: Ref.~\cite{engelhardt_spoofing_2024} by Engelhardt et al introduced a different construction of pseudoentangled holographic states; 
Ref.~\cite{schuster_random_2024} by Schuster, Haferkamp and Huang introduced a construction of pseudoentangled states based on shallow circuits of PFC PRUs---their 1D construction is similar, but not identical, to our MPS construction, and the techniques used in our proofs are different;
finally Ref.~\cite{lami_anticoncentration_2024} by Lami, De Nardis and Turkeshi analitically and numerically studied random MPS and PEPS states in the regime of $\chi = {\rm poly}(N)$, finding results that become consistent with ours when extrapolated to $\chi = \omega({\rm poly}(N))$.

\begin{acknowledgments}
\textit{Acknowledgments.}---MI gratefully acknowledges discussions with Adam Bouland, Tudor Giurgica-Tiron, and Nick Hunter-Jones.
\end{acknowledgments}

\let\oldaddcontentsline\addcontentsline
\renewcommand{\addcontentsline}[3]{}
\bibliography{pseudoentangled_rtns}

\begin{thebibliography}{57}%
\makeatletter
\providecommand \@ifxundefined [1]{%
 \@ifx{#1\undefined}
}%
\providecommand \@ifnum [1]{%
 \ifnum #1\expandafter \@firstoftwo
 \else \expandafter \@secondoftwo
 \fi
}%
\providecommand \@ifx [1]{%
 \ifx #1\expandafter \@firstoftwo
 \else \expandafter \@secondoftwo
 \fi
}%
\providecommand \natexlab [1]{#1}%
\providecommand \enquote  [1]{``#1''}%
\providecommand \bibnamefont  [1]{#1}%
\providecommand \bibfnamefont [1]{#1}%
\providecommand \citenamefont [1]{#1}%
\providecommand \href@noop [0]{\@secondoftwo}%
\providecommand \href [0]{\begingroup \@sanitize@url \@href}%
\providecommand \@href[1]{\@@startlink{#1}\@@href}%
\providecommand \@@href[1]{\endgroup#1\@@endlink}%
\providecommand \@sanitize@url [0]{\catcode `\\12\catcode `\$12\catcode
  `\&12\catcode `\#12\catcode `\^12\catcode `\_12\catcode `\%12\relax}%
\providecommand \@@startlink[1]{}%
\providecommand \@@endlink[0]{}%
\providecommand \url  [0]{\begingroup\@sanitize@url \@url }%
\providecommand \@url [1]{\endgroup\@href {#1}{\urlprefix }}%
\providecommand \urlprefix  [0]{URL }%
\providecommand \Eprint [0]{\href }%
\providecommand \doibase [0]{http://dx.doi.org/}%
\providecommand \selectlanguage [0]{\@gobble}%
\providecommand \bibinfo  [0]{\@secondoftwo}%
\providecommand \bibfield  [0]{\@secondoftwo}%
\providecommand \translation [1]{[#1]}%
\providecommand \BibitemOpen [0]{}%
\providecommand \bibitemStop [0]{}%
\providecommand \bibitemNoStop [0]{.\EOS\space}%
\providecommand \EOS [0]{\spacefactor3000\relax}%
\providecommand \BibitemShut  [1]{\csname bibitem#1\endcsname}%
\let\auto@bib@innerbib\@empty
\bibitem [{\citenamefont {Horodecki}\ and\ \citenamefont
  {Horodecki}(2009)}]{horodecki_quantum_2009}%
  \BibitemOpen
  \bibfield  {author} {\bibinfo {author} {\bibfnamefont {Ryszard}\ \bibnamefont
  {Horodecki}}\ and\ \bibinfo {author} {\bibfnamefont {Pawe}\ \bibnamefont
  {Horodecki}},\ }\bibfield  {title} {\enquote {\bibinfo {title} {Quantum
  entanglement},}\ }\href {\doibase 10.1103/RevModPhys.81.865} {\bibfield
  {journal} {\bibinfo  {journal} {Rev. Mod. Phys.}\ }\textbf {\bibinfo {volume}
  {81}},\ \bibinfo {pages} {865--942} (\bibinfo {year} {2009})}\BibitemShut
  {NoStop}%
\bibitem [{\citenamefont {Kitaev}\ and\ \citenamefont
  {Preskill}(2006)}]{kitaev_topological_2006}%
  \BibitemOpen
  \bibfield  {author} {\bibinfo {author} {\bibfnamefont {Alexei}\ \bibnamefont
  {Kitaev}}\ and\ \bibinfo {author} {\bibfnamefont {John}\ \bibnamefont
  {Preskill}},\ }\bibfield  {title} {\enquote {\bibinfo {title} {Topological
  {Entanglement} {Entropy}},}\ }\href {\doibase 10.1103/PhysRevLett.96.110404}
  {\bibfield  {journal} {\bibinfo  {journal} {Physical Review Letters}\
  }\textbf {\bibinfo {volume} {96}},\ \bibinfo {pages} {110404} (\bibinfo
  {year} {2006})}\BibitemShut {NoStop}%
\bibitem [{\citenamefont {Levin}\ and\ \citenamefont
  {Wen}(2006)}]{levin_detecting_2006}%
  \BibitemOpen
  \bibfield  {author} {\bibinfo {author} {\bibfnamefont {Michael}\ \bibnamefont
  {Levin}}\ and\ \bibinfo {author} {\bibfnamefont {Xiao-Gang}\ \bibnamefont
  {Wen}},\ }\bibfield  {title} {\enquote {\bibinfo {title} {Detecting
  {Topological} {Order} in a {Ground} {State} {Wave} {Function}},}\ }\href
  {\doibase 10.1103/PhysRevLett.96.110405} {\bibfield  {journal} {\bibinfo
  {journal} {Physical Review Letters}\ }\textbf {\bibinfo {volume} {96}},\
  \bibinfo {pages} {110405} (\bibinfo {year} {2006})}\BibitemShut {NoStop}%
\bibitem [{\citenamefont {Swingle}(2012)}]{swingle_constructing_2012}%
  \BibitemOpen
  \bibfield  {author} {\bibinfo {author} {\bibfnamefont {Brian}\ \bibnamefont
  {Swingle}},\ }\bibfield  {title} {\enquote {\bibinfo {title} {Constructing
  holographic spacetimes using entanglement renormalization},}\ }\href
  {\doibase 10.48550/arXiv.1209.3304} {\bibfield  {journal} {\bibinfo
  {journal} {arXiv:1209.3304}\ } (\bibinfo {year} {2012}),\
  10.48550/arXiv.1209.3304}\BibitemShut {NoStop}%
\bibitem [{\citenamefont {Maldacena}\ and\ \citenamefont
  {Susskind}(2013)}]{maldacena_cool_2013}%
  \BibitemOpen
  \bibfield  {author} {\bibinfo {author} {\bibfnamefont {J.}~\bibnamefont
  {Maldacena}}\ and\ \bibinfo {author} {\bibfnamefont {L.}~\bibnamefont
  {Susskind}},\ }\bibfield  {title} {\enquote {\bibinfo {title} {Cool horizons
  for entangled black holes},}\ }\href {\doibase 10.1002/prop.201300020}
  {\bibfield  {journal} {\bibinfo  {journal} {Fortschritte der Physik}\
  }\textbf {\bibinfo {volume} {61}},\ \bibinfo {pages} {781--811} (\bibinfo
  {year} {2013})}\BibitemShut {NoStop}%
\bibitem [{\citenamefont {Liu}\ and\ \citenamefont
  {Suh}(2014)}]{liu_entanglement_2014}%
  \BibitemOpen
  \bibfield  {author} {\bibinfo {author} {\bibfnamefont {Hong}\ \bibnamefont
  {Liu}}\ and\ \bibinfo {author} {\bibfnamefont {S.~Josephine}\ \bibnamefont
  {Suh}},\ }\bibfield  {title} {\enquote {\bibinfo {title} {Entanglement
  {Tsunami}: {Universal} {Scaling} in {Holographic} {Thermalization}},}\ }\href
  {\doibase 10.1103/PhysRevLett.112.011601} {\bibfield  {journal} {\bibinfo
  {journal} {Physical Review Letters}\ }\textbf {\bibinfo {volume} {112}},\
  \bibinfo {pages} {011601} (\bibinfo {year} {2014})}\BibitemShut {NoStop}%
\bibitem [{\citenamefont {Kaufman}\ \emph {et~al.}(2016)\citenamefont
  {Kaufman}, \citenamefont {Tai}, \citenamefont {Lukin}, \citenamefont
  {Rispoli}, \citenamefont {Schittko}, \citenamefont {Preiss},\ and\
  \citenamefont {Greiner}}]{kaufman_quantum_2016}%
  \BibitemOpen
  \bibfield  {author} {\bibinfo {author} {\bibfnamefont {Adam~M.}\ \bibnamefont
  {Kaufman}}, \bibinfo {author} {\bibfnamefont {M.~Eric}\ \bibnamefont {Tai}},
  \bibinfo {author} {\bibfnamefont {Alexander}\ \bibnamefont {Lukin}}, \bibinfo
  {author} {\bibfnamefont {Matthew}\ \bibnamefont {Rispoli}}, \bibinfo {author}
  {\bibfnamefont {Robert}\ \bibnamefont {Schittko}}, \bibinfo {author}
  {\bibfnamefont {Philipp~M.}\ \bibnamefont {Preiss}}, \ and\ \bibinfo {author}
  {\bibfnamefont {Markus}\ \bibnamefont {Greiner}},\ }\bibfield  {title}
  {\enquote {\bibinfo {title} {Quantum thermalization through entanglement in
  an isolated many-body system},}\ }\href {\doibase 10.1126/science.aaf6725}
  {\bibfield  {journal} {\bibinfo  {journal} {Science}\ }\textbf {\bibinfo
  {volume} {353}},\ \bibinfo {pages} {794--800} (\bibinfo {year}
  {2016})}\BibitemShut {NoStop}%
\bibitem [{\citenamefont {Abanin}\ \emph {et~al.}(2019)\citenamefont {Abanin},
  \citenamefont {Altman}, \citenamefont {Bloch},\ and\ \citenamefont
  {Serbyn}}]{abanin_colloquium_2019}%
  \BibitemOpen
  \bibfield  {author} {\bibinfo {author} {\bibfnamefont {Dmitry~A.}\
  \bibnamefont {Abanin}}, \bibinfo {author} {\bibfnamefont {Ehud}\ \bibnamefont
  {Altman}}, \bibinfo {author} {\bibfnamefont {Immanuel}\ \bibnamefont
  {Bloch}}, \ and\ \bibinfo {author} {\bibfnamefont {Maksym}\ \bibnamefont
  {Serbyn}},\ }\bibfield  {title} {\enquote {\bibinfo {title} {Colloquium:
  {Many}-body localization, thermalization, and entanglement},}\ }\href
  {\doibase 10.1103/RevModPhys.91.021001} {\bibfield  {journal} {\bibinfo
  {journal} {Reviews of Modern Physics}\ }\textbf {\bibinfo {volume} {91}},\
  \bibinfo {pages} {021001} (\bibinfo {year} {2019})}\BibitemShut {NoStop}%
\bibitem [{\citenamefont {Aaronson}\ \emph {et~al.}(2023)\citenamefont
  {Aaronson}, \citenamefont {Bouland}, \citenamefont {Fefferman}, \citenamefont
  {Ghosh}, \citenamefont {Vazirani}, \citenamefont {Zhang},\ and\ \citenamefont
  {Zhou}}]{aaronson_quantum_2023}%
  \BibitemOpen
  \bibfield  {author} {\bibinfo {author} {\bibfnamefont {Scott}\ \bibnamefont
  {Aaronson}}, \bibinfo {author} {\bibfnamefont {Adam}\ \bibnamefont
  {Bouland}}, \bibinfo {author} {\bibfnamefont {Bill}\ \bibnamefont
  {Fefferman}}, \bibinfo {author} {\bibfnamefont {Soumik}\ \bibnamefont
  {Ghosh}}, \bibinfo {author} {\bibfnamefont {Umesh}\ \bibnamefont {Vazirani}},
  \bibinfo {author} {\bibfnamefont {Chenyi}\ \bibnamefont {Zhang}}, \ and\
  \bibinfo {author} {\bibfnamefont {Zixin}\ \bibnamefont {Zhou}},\ }\bibfield
  {title} {\enquote {\bibinfo {title} {Quantum {Pseudoentanglement}},}\ }\href
  {\doibase 10.48550/arXiv.2211.00747} {\bibfield  {journal} {\bibinfo
  {journal} {arXiv:2211.00747}\ } (\bibinfo {year} {2023}),\
  10.48550/arXiv.2211.00747}\BibitemShut {NoStop}%
\bibitem [{\citenamefont {Giurgica-Tiron}\ and\ \citenamefont
  {Bouland}(2023)}]{giurgica-tiron_pseudorandomness_2023}%
  \BibitemOpen
  \bibfield  {author} {\bibinfo {author} {\bibfnamefont {Tudor}\ \bibnamefont
  {Giurgica-Tiron}}\ and\ \bibinfo {author} {\bibfnamefont {Adam}\ \bibnamefont
  {Bouland}},\ }\bibfield  {title} {\enquote {\bibinfo {title}
  {Pseudorandomness from {Subset} {States}},}\ }\href {\doibase
  10.48550/arXiv.2312.09206} {\bibfield  {journal} {\bibinfo  {journal}
  {arXiv:2312.09206}\ } (\bibinfo {year} {2023}),\
  10.48550/arXiv.2312.09206}\BibitemShut {NoStop}%
\bibitem [{\citenamefont {Jeronimo}\ \emph {et~al.}(2024)\citenamefont
  {Jeronimo}, \citenamefont {Magrafta},\ and\ \citenamefont
  {Wu}}]{jeronimo_pseudorandom_2024}%
  \BibitemOpen
  \bibfield  {author} {\bibinfo {author} {\bibfnamefont {Fernando~Granha}\
  \bibnamefont {Jeronimo}}, \bibinfo {author} {\bibfnamefont {Nir}\
  \bibnamefont {Magrafta}}, \ and\ \bibinfo {author} {\bibfnamefont {Pei}\
  \bibnamefont {Wu}},\ }\bibfield  {title} {\enquote {\bibinfo {title}
  {Pseudorandom and pseudoentangled states from subset states},}\ }\href
  {https://arxiv.org/abs/2312.15285} {\  (\bibinfo {year} {2024})},\ \Eprint
  {http://arxiv.org/abs/2312.15285} {arXiv:2312.15285 [quant-ph]} \BibitemShut
  {NoStop}%
\bibitem [{\citenamefont {Bouland}\ \emph {et~al.}(2023)\citenamefont
  {Bouland}, \citenamefont {Fefferman}, \citenamefont {Ghosh}, \citenamefont
  {Metger}, \citenamefont {Vazirani}, \citenamefont {Zhang},\ and\
  \citenamefont {Zhou}}]{bouland_public-key_2023}%
  \BibitemOpen
  \bibfield  {author} {\bibinfo {author} {\bibfnamefont {Adam}\ \bibnamefont
  {Bouland}}, \bibinfo {author} {\bibfnamefont {Bill}\ \bibnamefont
  {Fefferman}}, \bibinfo {author} {\bibfnamefont {Soumik}\ \bibnamefont
  {Ghosh}}, \bibinfo {author} {\bibfnamefont {Tony}\ \bibnamefont {Metger}},
  \bibinfo {author} {\bibfnamefont {Umesh}\ \bibnamefont {Vazirani}}, \bibinfo
  {author} {\bibfnamefont {Chenyi}\ \bibnamefont {Zhang}}, \ and\ \bibinfo
  {author} {\bibfnamefont {Zixin}\ \bibnamefont {Zhou}},\ }\bibfield  {title}
  {\enquote {\bibinfo {title} {Public-key pseudoentanglement and the hardness
  of learning ground state entanglement structure},}\ }\href {\doibase
  10.48550/arXiv.2311.12017} {\bibfield  {journal} {\bibinfo  {journal}
  {arXiv:2311.12017}\ } (\bibinfo {year} {2023}),\
  10.48550/arXiv.2311.12017}\BibitemShut {NoStop}%
\bibitem [{\citenamefont {Ji}\ \emph {et~al.}(2018)\citenamefont {Ji},
  \citenamefont {Liu},\ and\ \citenamefont {Song}}]{ji_pseudorandom_2018}%
  \BibitemOpen
  \bibfield  {author} {\bibinfo {author} {\bibfnamefont {Zhengfeng}\
  \bibnamefont {Ji}}, \bibinfo {author} {\bibfnamefont {Yi-Kai}\ \bibnamefont
  {Liu}}, \ and\ \bibinfo {author} {\bibfnamefont {Fang}\ \bibnamefont
  {Song}},\ }\bibfield  {title} {\enquote {\bibinfo {title} {Pseudorandom
  {Quantum} {States}},}\ }in\ \href {\doibase 10.1007/978-3-319-96878-0_5}
  {\emph {\bibinfo {booktitle} {Advances in {Cryptology} - {CRYPTO} 2018}}},\
  \bibinfo {editor} {edited by\ \bibinfo {editor} {\bibfnamefont {Hovav}\
  \bibnamefont {Shacham}}\ and\ \bibinfo {editor} {\bibfnamefont {Alexandra}\
  \bibnamefont {Boldyreva}}}\ (\bibinfo {address} {Cham},\ \bibinfo {year}
  {2018})\ pp.\ \bibinfo {pages} {126--152}\BibitemShut {NoStop}%
\bibitem [{\citenamefont {Feng}\ and\ \citenamefont
  {Ippoliti}(2024)}]{feng_dynamics_2024}%
  \BibitemOpen
  \bibfield  {author} {\bibinfo {author} {\bibfnamefont {Xiaozhou}\
  \bibnamefont {Feng}}\ and\ \bibinfo {author} {\bibfnamefont {Matteo}\
  \bibnamefont {Ippoliti}},\ }\bibfield  {title} {\enquote {\bibinfo {title}
  {Dynamics of {Pseudoentanglement}},}\ }\href {\doibase
  10.48550/arXiv.2403.09619} {\bibfield  {journal} {\bibinfo  {journal}
  {arXiv:2403.09619}\ } (\bibinfo {year} {2024}),\
  10.48550/arXiv.2403.09619}\BibitemShut {NoStop}%
\bibitem [{\citenamefont {Bouland}\ \emph {et~al.}(2019)\citenamefont
  {Bouland}, \citenamefont {Fefferman},\ and\ \citenamefont
  {Vazirani}}]{bouland_computational_2019}%
  \BibitemOpen
  \bibfield  {author} {\bibinfo {author} {\bibfnamefont {Adam}\ \bibnamefont
  {Bouland}}, \bibinfo {author} {\bibfnamefont {Bill}\ \bibnamefont
  {Fefferman}}, \ and\ \bibinfo {author} {\bibfnamefont {Umesh}\ \bibnamefont
  {Vazirani}},\ }\bibfield  {title} {\enquote {\bibinfo {title} {Computational
  pseudorandomness, the wormhole growth paradox, and constraints on the
  {AdS}/{CFT} duality},}\ }\href {\doibase 10.48550/arXiv.1910.14646}
  {\bibfield  {journal} {\bibinfo  {journal} {arXiv:1910.14646}\ } (\bibinfo
  {year} {2019}),\ 10.48550/arXiv.1910.14646}\BibitemShut {NoStop}%
\bibitem [{\citenamefont {Gu}\ \emph {et~al.}(2024{\natexlab{a}})\citenamefont
  {Gu}, \citenamefont {Leone}, \citenamefont {Ghosh}, \citenamefont {Eisert},
  \citenamefont {Yelin},\ and\ \citenamefont {Quek}}]{gu_little_2023}%
  \BibitemOpen
  \bibfield  {author} {\bibinfo {author} {\bibfnamefont {Andi}\ \bibnamefont
  {Gu}}, \bibinfo {author} {\bibfnamefont {Lorenzo}\ \bibnamefont {Leone}},
  \bibinfo {author} {\bibfnamefont {Soumik}\ \bibnamefont {Ghosh}}, \bibinfo
  {author} {\bibfnamefont {Jens}\ \bibnamefont {Eisert}}, \bibinfo {author}
  {\bibfnamefont {Susanne~F.}\ \bibnamefont {Yelin}}, \ and\ \bibinfo {author}
  {\bibfnamefont {Yihui}\ \bibnamefont {Quek}},\ }\bibfield  {title} {\enquote
  {\bibinfo {title} {Pseudomagic quantum states},}\ }\href {\doibase
  10.1103/PhysRevLett.132.210602} {\bibfield  {journal} {\bibinfo  {journal}
  {Phys. Rev. Lett.}\ }\textbf {\bibinfo {volume} {132}},\ \bibinfo {pages}
  {210602} (\bibinfo {year} {2024}{\natexlab{a}})}\BibitemShut {NoStop}%
\bibitem [{\citenamefont {Gu}\ \emph {et~al.}(2024{\natexlab{b}})\citenamefont
  {Gu}, \citenamefont {Oliviero},\ and\ \citenamefont
  {Leone}}]{gu_magic-induced_2024}%
  \BibitemOpen
  \bibfield  {author} {\bibinfo {author} {\bibfnamefont {Andi}\ \bibnamefont
  {Gu}}, \bibinfo {author} {\bibfnamefont {Salvatore F.~E.}\ \bibnamefont
  {Oliviero}}, \ and\ \bibinfo {author} {\bibfnamefont {Lorenzo}\ \bibnamefont
  {Leone}},\ }\bibfield  {title} {\enquote {\bibinfo {title} {Magic-induced
  computational separation in entanglement theory},}\ }\href
  {https://arxiv.org/abs/2403.19610} {\  (\bibinfo {year}
  {2024}{\natexlab{b}})},\ \Eprint {http://arxiv.org/abs/2403.19610}
  {arXiv:2403.19610 [quant-ph]} \BibitemShut {NoStop}%
\bibitem [{\citenamefont {Cirac}\ \emph {et~al.}(2021)\citenamefont {Cirac},
  \citenamefont {Perez-Garcia}, \citenamefont {Schuch},\ and\ \citenamefont
  {Verstraete}}]{cirac_matrix_2021}%
  \BibitemOpen
  \bibfield  {author} {\bibinfo {author} {\bibfnamefont {J.~Ignacio}\
  \bibnamefont {Cirac}}, \bibinfo {author} {\bibfnamefont {David}\ \bibnamefont
  {Perez-Garcia}}, \bibinfo {author} {\bibfnamefont {Norbert}\ \bibnamefont
  {Schuch}}, \ and\ \bibinfo {author} {\bibfnamefont {Frank}\ \bibnamefont
  {Verstraete}},\ }\bibfield  {title} {\enquote {\bibinfo {title} {Matrix
  product states and projected entangled pair states: {Concepts}, symmetries,
  theorems},}\ }\href {\doibase 10.1103/RevModPhys.93.045003} {\bibfield
  {journal} {\bibinfo  {journal} {Reviews of Modern Physics}\ }\textbf
  {\bibinfo {volume} {93}},\ \bibinfo {pages} {045003} (\bibinfo {year}
  {2021})}\BibitemShut {NoStop}%
\bibitem [{\citenamefont {Evenbly}\ and\ \citenamefont
  {Vidal}(2011)}]{evenbly_tensor_2011}%
  \BibitemOpen
  \bibfield  {author} {\bibinfo {author} {\bibfnamefont {G.}~\bibnamefont
  {Evenbly}}\ and\ \bibinfo {author} {\bibfnamefont {G.}~\bibnamefont
  {Vidal}},\ }\bibfield  {title} {\enquote {\bibinfo {title} {Tensor {Network}
  {States} and {Geometry}},}\ }\href {\doibase 10.1007/s10955-011-0237-4}
  {\bibfield  {journal} {\bibinfo  {journal} {Journal of Statistical Physics}\
  }\textbf {\bibinfo {volume} {145}},\ \bibinfo {pages} {891--918} (\bibinfo
  {year} {2011})}\BibitemShut {NoStop}%
\bibitem [{\citenamefont {Zaletel}\ and\ \citenamefont
  {Pollmann}(2020)}]{zaletel_isometric_2020}%
  \BibitemOpen
  \bibfield  {author} {\bibinfo {author} {\bibfnamefont {Michael~P.}\
  \bibnamefont {Zaletel}}\ and\ \bibinfo {author} {\bibfnamefont {Frank}\
  \bibnamefont {Pollmann}},\ }\bibfield  {title} {\enquote {\bibinfo {title}
  {Isometric {Tensor} {Network} {States} in {Two} {Dimensions}},}\ }\href
  {\doibase 10.1103/PhysRevLett.124.037201} {\bibfield  {journal} {\bibinfo
  {journal} {Physical Review Letters}\ }\textbf {\bibinfo {volume} {124}},\
  \bibinfo {pages} {037201} (\bibinfo {year} {2020})}\BibitemShut {NoStop}%
\bibitem [{\citenamefont {Jahn}\ and\ \citenamefont
  {Eisert}(2021)}]{jahn_holographic_2021}%
  \BibitemOpen
  \bibfield  {author} {\bibinfo {author} {\bibfnamefont {Alexander}\
  \bibnamefont {Jahn}}\ and\ \bibinfo {author} {\bibfnamefont {Jens}\
  \bibnamefont {Eisert}},\ }\bibfield  {title} {\enquote {\bibinfo {title}
  {Holographic tensor network models and quantum error correction: a topical
  review},}\ }\href {\doibase 10.1088/2058-9565/ac0293} {\bibfield  {journal}
  {\bibinfo  {journal} {Quantum Science and Technology}\ }\textbf {\bibinfo
  {volume} {6}},\ \bibinfo {pages} {033002} (\bibinfo {year}
  {2021})}\BibitemShut {NoStop}%
\bibitem [{\citenamefont {Haghshenas}\ \emph {et~al.}(2022)\citenamefont
  {Haghshenas}, \citenamefont {Gray}, \citenamefont {Potter},\ and\
  \citenamefont {Chan}}]{haghshenas_variational_2022}%
  \BibitemOpen
  \bibfield  {author} {\bibinfo {author} {\bibfnamefont {Reza}\ \bibnamefont
  {Haghshenas}}, \bibinfo {author} {\bibfnamefont {Johnnie}\ \bibnamefont
  {Gray}}, \bibinfo {author} {\bibfnamefont {Andrew~C.}\ \bibnamefont
  {Potter}}, \ and\ \bibinfo {author} {\bibfnamefont {Garnet Kin-Lic}\
  \bibnamefont {Chan}},\ }\bibfield  {title} {\enquote {\bibinfo {title}
  {Variational {Power} of {Quantum} {Circuit} {Tensor} {Networks}},}\ }\href
  {\doibase 10.1103/PhysRevX.12.011047} {\bibfield  {journal} {\bibinfo
  {journal} {Physical Review X}\ }\textbf {\bibinfo {volume} {12}},\ \bibinfo
  {pages} {011047} (\bibinfo {year} {2022})}\BibitemShut {NoStop}%
\bibitem [{\citenamefont {Metger}\ \emph {et~al.}(2024)\citenamefont {Metger},
  \citenamefont {Poremba}, \citenamefont {Sinha},\ and\ \citenamefont
  {Yuen}}]{metger_simple_2024}%
  \BibitemOpen
  \bibfield  {author} {\bibinfo {author} {\bibfnamefont {Tony}\ \bibnamefont
  {Metger}}, \bibinfo {author} {\bibfnamefont {Alexander}\ \bibnamefont
  {Poremba}}, \bibinfo {author} {\bibfnamefont {Makrand}\ \bibnamefont
  {Sinha}}, \ and\ \bibinfo {author} {\bibfnamefont {Henry}\ \bibnamefont
  {Yuen}},\ }\bibfield  {title} {\enquote {\bibinfo {title} {Simple
  constructions of linear-depth t-designs and pseudorandom unitaries},}\ }\href
  {http://arxiv.org/abs/2404.12647} {\bibfield  {journal} {\bibinfo  {journal}
  {arXiv:2404.12647}\ } (\bibinfo {year} {2024})}\BibitemShut {NoStop}%
\bibitem [{\citenamefont {Soejima}\ \emph {et~al.}(2020)\citenamefont
  {Soejima}, \citenamefont {Siva}, \citenamefont {Bultinck}, \citenamefont
  {Chatterjee}, \citenamefont {Pollmann},\ and\ \citenamefont
  {Zaletel}}]{soejima_isometric_2020}%
  \BibitemOpen
  \bibfield  {author} {\bibinfo {author} {\bibfnamefont {Tomohiro}\
  \bibnamefont {Soejima}}, \bibinfo {author} {\bibfnamefont {Karthik}\
  \bibnamefont {Siva}}, \bibinfo {author} {\bibfnamefont {Nick}\ \bibnamefont
  {Bultinck}}, \bibinfo {author} {\bibfnamefont {Shubhayu}\ \bibnamefont
  {Chatterjee}}, \bibinfo {author} {\bibfnamefont {Frank}\ \bibnamefont
  {Pollmann}}, \ and\ \bibinfo {author} {\bibfnamefont {Michael~P.}\
  \bibnamefont {Zaletel}},\ }\bibfield  {title} {\enquote {\bibinfo {title}
  {Isometric tensor network representation of string-net liquids},}\ }\href
  {\doibase 10.1103/PhysRevB.101.085117} {\bibfield  {journal} {\bibinfo
  {journal} {Physical Review B}\ }\textbf {\bibinfo {volume} {101}},\ \bibinfo
  {pages} {085117} (\bibinfo {year} {2020})}\BibitemShut {NoStop}%
\bibitem [{\citenamefont {Tepaske}\ and\ \citenamefont
  {Luitz}(2021)}]{tepaske_three-dimensional_2021}%
  \BibitemOpen
  \bibfield  {author} {\bibinfo {author} {\bibfnamefont {Maurits S.~J.}\
  \bibnamefont {Tepaske}}\ and\ \bibinfo {author} {\bibfnamefont {David~J.}\
  \bibnamefont {Luitz}},\ }\bibfield  {title} {\enquote {\bibinfo {title}
  {Three-dimensional isometric tensor networks},}\ }\href {\doibase
  10.1103/PhysRevResearch.3.023236} {\bibfield  {journal} {\bibinfo  {journal}
  {Physical Review Research}\ }\textbf {\bibinfo {volume} {3}},\ \bibinfo
  {pages} {023236} (\bibinfo {year} {2021})}\BibitemShut {NoStop}%
\bibitem [{\citenamefont {Verstraete}\ \emph {et~al.}(2006)\citenamefont
  {Verstraete}, \citenamefont {Wolf}, \citenamefont {Perez-Garcia},\ and\
  \citenamefont {Cirac}}]{verstraete_criticality_2006}%
  \BibitemOpen
  \bibfield  {author} {\bibinfo {author} {\bibfnamefont {F.}~\bibnamefont
  {Verstraete}}, \bibinfo {author} {\bibfnamefont {M.~M.}\ \bibnamefont
  {Wolf}}, \bibinfo {author} {\bibfnamefont {D.}~\bibnamefont {Perez-Garcia}},
  \ and\ \bibinfo {author} {\bibfnamefont {J.~I.}\ \bibnamefont {Cirac}},\
  }\bibfield  {title} {\enquote {\bibinfo {title} {Criticality, the {Area}
  {Law}, and the {Computational} {Power} of {Projected} {Entangled} {Pair}
  {States}},}\ }\href {\doibase 10.1103/PhysRevLett.96.220601} {\bibfield
  {journal} {\bibinfo  {journal} {Physical Review Letters}\ }\textbf {\bibinfo
  {volume} {96}},\ \bibinfo {pages} {220601} (\bibinfo {year}
  {2006})}\BibitemShut {NoStop}%
\bibitem [{\citenamefont {Schuch}\ \emph {et~al.}(2007)\citenamefont {Schuch},
  \citenamefont {Wolf}, \citenamefont {Verstraete},\ and\ \citenamefont
  {Cirac}}]{schuch_computational_2007}%
  \BibitemOpen
  \bibfield  {author} {\bibinfo {author} {\bibfnamefont {Norbert}\ \bibnamefont
  {Schuch}}, \bibinfo {author} {\bibfnamefont {Michael~M.}\ \bibnamefont
  {Wolf}}, \bibinfo {author} {\bibfnamefont {Frank}\ \bibnamefont
  {Verstraete}}, \ and\ \bibinfo {author} {\bibfnamefont {J.~Ignacio}\
  \bibnamefont {Cirac}},\ }\bibfield  {title} {\enquote {\bibinfo {title}
  {Computational {Complexity} of {Projected} {Entangled} {Pair} {States}},}\
  }\href {\doibase 10.1103/PhysRevLett.98.140506} {\bibfield  {journal}
  {\bibinfo  {journal} {Physical Review Letters}\ }\textbf {\bibinfo {volume}
  {98}},\ \bibinfo {pages} {140506} (\bibinfo {year} {2007})}\BibitemShut
  {NoStop}%
\bibitem [{\citenamefont {Pastawski}\ \emph {et~al.}(2015)\citenamefont
  {Pastawski}, \citenamefont {Yoshida}, \citenamefont {Harlow},\ and\
  \citenamefont {Preskill}}]{pastawski_holographic_2015}%
  \BibitemOpen
  \bibfield  {author} {\bibinfo {author} {\bibfnamefont {Fernando}\
  \bibnamefont {Pastawski}}, \bibinfo {author} {\bibfnamefont {Beni}\
  \bibnamefont {Yoshida}}, \bibinfo {author} {\bibfnamefont {Daniel}\
  \bibnamefont {Harlow}}, \ and\ \bibinfo {author} {\bibfnamefont {John}\
  \bibnamefont {Preskill}},\ }\bibfield  {title} {\enquote {\bibinfo {title}
  {Holographic quantum error-correcting codes: toy models for the bulk/boundary
  correspondence},}\ }\href {\doibase 10.1007/JHEP06(2015)149} {\bibfield
  {journal} {\bibinfo  {journal} {Journal of High Energy Physics}\ }\textbf
  {\bibinfo {volume} {2015}},\ \bibinfo {pages} {149} (\bibinfo {year}
  {2015})}\BibitemShut {NoStop}%
\bibitem [{\citenamefont {Hayden}\ \emph {et~al.}(2016)\citenamefont {Hayden},
  \citenamefont {Nezami}, \citenamefont {Qi}, \citenamefont {Thomas},
  \citenamefont {Walter},\ and\ \citenamefont
  {Yang}}]{hayden_holographic_2016}%
  \BibitemOpen
  \bibfield  {author} {\bibinfo {author} {\bibfnamefont {Patrick}\ \bibnamefont
  {Hayden}}, \bibinfo {author} {\bibfnamefont {Sepehr}\ \bibnamefont {Nezami}},
  \bibinfo {author} {\bibfnamefont {Xiao-Liang}\ \bibnamefont {Qi}}, \bibinfo
  {author} {\bibfnamefont {Nathaniel}\ \bibnamefont {Thomas}}, \bibinfo
  {author} {\bibfnamefont {Michael}\ \bibnamefont {Walter}}, \ and\ \bibinfo
  {author} {\bibfnamefont {Zhao}\ \bibnamefont {Yang}},\ }\bibfield  {title}
  {\enquote {\bibinfo {title} {Holographic duality from random tensor
  networks},}\ }\href {\doibase 10.1007/JHEP11(2016)009} {\bibfield  {journal}
  {\bibinfo  {journal} {Journal of High Energy Physics}\ }\textbf {\bibinfo
  {volume} {2016}},\ \bibinfo {pages} {9} (\bibinfo {year} {2016})}\BibitemShut
  {NoStop}%
\bibitem [{\citenamefont {Vasseur}\ \emph {et~al.}(2019)\citenamefont
  {Vasseur}, \citenamefont {Potter}, \citenamefont {You},\ and\ \citenamefont
  {Ludwig}}]{vasseur_entanglement_2019}%
  \BibitemOpen
  \bibfield  {author} {\bibinfo {author} {\bibfnamefont {Romain}\ \bibnamefont
  {Vasseur}}, \bibinfo {author} {\bibfnamefont {Andrew~C.}\ \bibnamefont
  {Potter}}, \bibinfo {author} {\bibfnamefont {Yi-Zhuang}\ \bibnamefont {You}},
  \ and\ \bibinfo {author} {\bibfnamefont {Andreas W.~W.}\ \bibnamefont
  {Ludwig}},\ }\bibfield  {title} {\enquote {\bibinfo {title} {Entanglement
  transitions from holographic random tensor networks},}\ }\href {\doibase
  10.1103/PhysRevB.100.134203} {\bibfield  {journal} {\bibinfo  {journal}
  {Physical Review B}\ }\textbf {\bibinfo {volume} {100}},\ \bibinfo {pages}
  {134203} (\bibinfo {year} {2019})}\BibitemShut {NoStop}%
\bibitem [{\citenamefont {Brandao}\ \emph {et~al.}(2016)\citenamefont
  {Brandao}, \citenamefont {Harrow},\ and\ \citenamefont
  {Horodecki}}]{brandao_efficient_2016}%
  \BibitemOpen
  \bibfield  {author} {\bibinfo {author} {\bibfnamefont {Fernando G. S.~L.}\
  \bibnamefont {Brandao}}, \bibinfo {author} {\bibfnamefont {Aram~W.}\
  \bibnamefont {Harrow}}, \ and\ \bibinfo {author} {\bibfnamefont {Michal}\
  \bibnamefont {Horodecki}},\ }\bibfield  {title} {\enquote {\bibinfo {title}
  {Efficient {Quantum} {Pseudorandomness}},}\ }\href {\doibase
  10.1103/PhysRevLett.116.170502} {\bibfield  {journal} {\bibinfo  {journal}
  {Physical Review Letters}\ }\textbf {\bibinfo {volume} {116}},\ \bibinfo
  {pages} {170502} (\bibinfo {year} {2016})}\BibitemShut {NoStop}%
\bibitem [{\citenamefont {Nahum}\ \emph {et~al.}(2017)\citenamefont {Nahum},
  \citenamefont {Ruhman}, \citenamefont {Vijay},\ and\ \citenamefont
  {Haah}}]{nahum_quantum_2017}%
  \BibitemOpen
  \bibfield  {author} {\bibinfo {author} {\bibfnamefont {Adam}\ \bibnamefont
  {Nahum}}, \bibinfo {author} {\bibfnamefont {Jonathan}\ \bibnamefont
  {Ruhman}}, \bibinfo {author} {\bibfnamefont {Sagar}\ \bibnamefont {Vijay}}, \
  and\ \bibinfo {author} {\bibfnamefont {Jeongwan}\ \bibnamefont {Haah}},\
  }\bibfield  {title} {\enquote {\bibinfo {title} {Quantum {Entanglement}
  {Growth} under {Random} {Unitary} {Dynamics}},}\ }\href {\doibase
  10.1103/PhysRevX.7.031016} {\bibfield  {journal} {\bibinfo  {journal}
  {Physical Review X}\ }\textbf {\bibinfo {volume} {7}},\ \bibinfo {pages}
  {031016} (\bibinfo {year} {2017})}\BibitemShut {NoStop}%
\bibitem [{\citenamefont {Qi}(2013)}]{qi_exact_2013}%
  \BibitemOpen
  \bibfield  {author} {\bibinfo {author} {\bibfnamefont {Xiao-Liang}\
  \bibnamefont {Qi}},\ }\bibfield  {title} {\enquote {\bibinfo {title} {Exact
  holographic mapping and emergent space-time geometry},}\ }\href {\doibase
  10.48550/arXiv.1309.6282} {\bibfield  {journal} {\bibinfo  {journal}
  {arXiv:1309.6282}\ } (\bibinfo {year} {2013}),\
  10.48550/arXiv.1309.6282}\BibitemShut {NoStop}%
\bibitem [{Note1()}]{Note1}%
  \BibitemOpen
  \bibinfo {note} {$\protect \mathcal {A}$ is a binary measurement on $m$
  copies of the Hilbert space that can be implemented with polynomial
  resources.}\BibitemShut {Stop}%
\bibitem [{\citenamefont {Page}(1993)}]{page_average_1993}%
  \BibitemOpen
  \bibfield  {author} {\bibinfo {author} {\bibfnamefont {Don~N.}\ \bibnamefont
  {Page}},\ }\bibfield  {title} {\enquote {\bibinfo {title} {Average entropy of
  a subsystem},}\ }\href {\doibase 10.1103/PhysRevLett.71.1291} {\bibfield
  {journal} {\bibinfo  {journal} {Physical Review Letters}\ }\textbf {\bibinfo
  {volume} {71}},\ \bibinfo {pages} {1291--1294} (\bibinfo {year}
  {1993})}\BibitemShut {NoStop}%
\bibitem [{Note2()}]{Note2}%
  \BibitemOpen
  \bibinfo {note} {This approach is also known as ``holographic''~\cite
  {foss-feig_holographic_2021,anand_holographic_2023}, though not to be
  confused with the other usage of the word in this paper.}\BibitemShut {Stop}%
\bibitem [{Note3()}]{Note3}%
  \BibitemOpen
  \bibinfo {note} {The sampling of random Clifford unitaries is efficient as
  shown in Ref.~\cite {Berg_A_2021}}\BibitemShut {NoStop}%
\bibitem [{Note4()}]{Note4}%
  \BibitemOpen
  \bibinfo {note} {See Supplementary Information. The Supplementary Information
  contains Ref.~\cite {collins_weingarten_2017}.}\BibitemShut {Stop}%
\bibitem [{\citenamefont {Hunter-Jones}(2019)}]{hunter-jones_unitary_2019}%
  \BibitemOpen
  \bibfield  {author} {\bibinfo {author} {\bibfnamefont {Nicholas}\
  \bibnamefont {Hunter-Jones}},\ }\bibfield  {title} {\enquote {\bibinfo
  {title} {Unitary designs from statistical mechanics in random quantum
  circuits},}\ }\href {\doibase 10.48550/arXiv.1905.12053} {\bibfield
  {journal} {\bibinfo  {journal} {arXiv:1905.12053}\ } (\bibinfo {year}
  {2019}),\ 10.48550/arXiv.1905.12053}\BibitemShut {NoStop}%
\bibitem [{\citenamefont {Susskind}(2016)}]{susskind_complexity_2016}%
  \BibitemOpen
  \bibfield  {author} {\bibinfo {author} {\bibfnamefont {Leonard}\ \bibnamefont
  {Susskind}},\ }\bibfield  {title} {\enquote {\bibinfo {title} {Computational
  complexity and black hole horizons},}\ }\href {\doibase
  https://doi.org/10.1002/prop.201500092} {\bibfield  {journal} {\bibinfo
  {journal} {Fortschritte der Physik}\ }\textbf {\bibinfo {volume} {64}},\
  \bibinfo {pages} {24--43} (\bibinfo {year} {2016})},\ \Eprint
  {http://arxiv.org/abs/https://onlinelibrary.wiley.com/doi/pdf/10.1002/prop.201500092}
  {https://onlinelibrary.wiley.com/doi/pdf/10.1002/prop.201500092} \BibitemShut
  {NoStop}%
\bibitem [{\citenamefont {Aaronson}\ and\ \citenamefont
  {Pollack}(2023)}]{aaronson_discrete_2023}%
  \BibitemOpen
  \bibfield  {author} {\bibinfo {author} {\bibfnamefont {Scott}\ \bibnamefont
  {Aaronson}}\ and\ \bibinfo {author} {\bibfnamefont {Jason}\ \bibnamefont
  {Pollack}},\ }\bibfield  {title} {\enquote {\bibinfo {title} {Discrete {Bulk}
  {Reconstruction}},}\ }\href {\doibase 10.1007/JHEP04(2023)037} {\bibfield
  {journal} {\bibinfo  {journal} {Journal of High Energy Physics}\ }\textbf
  {\bibinfo {volume} {2023}},\ \bibinfo {pages} {37} (\bibinfo {year}
  {2023})}\BibitemShut {NoStop}%
\bibitem [{\citenamefont {Qi}\ \emph {et~al.}(2017)\citenamefont {Qi},
  \citenamefont {Yang},\ and\ \citenamefont {You}}]{qi_holographic_2017}%
  \BibitemOpen
  \bibfield  {author} {\bibinfo {author} {\bibfnamefont {Xiao-Liang}\
  \bibnamefont {Qi}}, \bibinfo {author} {\bibfnamefont {Zhao}\ \bibnamefont
  {Yang}}, \ and\ \bibinfo {author} {\bibfnamefont {Yi-Zhuang}\ \bibnamefont
  {You}},\ }\bibfield  {title} {\enquote {\bibinfo {title} {Holographic
  coherent states from random tensor networks},}\ }\href {\doibase
  10.1007/JHEP08(2017)060} {\bibfield  {journal} {\bibinfo  {journal} {Journal
  of High Energy Physics}\ }\textbf {\bibinfo {volume} {2017}},\ \bibinfo
  {pages} {60} (\bibinfo {year} {2017})}\BibitemShut {NoStop}%
\bibitem [{\citenamefont {Zhou}\ and\ \citenamefont
  {Nahum}(2019)}]{zhou_emergent_2019}%
  \BibitemOpen
  \bibfield  {author} {\bibinfo {author} {\bibfnamefont {Tianci}\ \bibnamefont
  {Zhou}}\ and\ \bibinfo {author} {\bibfnamefont {Adam}\ \bibnamefont
  {Nahum}},\ }\bibfield  {title} {\enquote {\bibinfo {title} {Emergent
  statistical mechanics of entanglement in random unitary circuits},}\ }\href
  {\doibase 10.1103/PhysRevB.99.174205} {\bibfield  {journal} {\bibinfo
  {journal} {Phys. Rev. B}\ }\textbf {\bibinfo {volume} {99}},\ \bibinfo
  {pages} {174205} (\bibinfo {year} {2019})}\BibitemShut {NoStop}%
\bibitem [{\citenamefont {Li}\ and\ \citenamefont
  {Fisher}(2021)}]{li_statistical_2021}%
  \BibitemOpen
  \bibfield  {author} {\bibinfo {author} {\bibfnamefont {Yaodong}\ \bibnamefont
  {Li}}\ and\ \bibinfo {author} {\bibfnamefont {Matthew P.~A.}\ \bibnamefont
  {Fisher}},\ }\bibfield  {title} {\enquote {\bibinfo {title} {Statistical
  mechanics of quantum error correcting codes},}\ }\href {\doibase
  10.1103/PhysRevB.103.104306} {\bibfield  {journal} {\bibinfo  {journal}
  {Physical Review B}\ }\textbf {\bibinfo {volume} {103}},\ \bibinfo {pages}
  {104306} (\bibinfo {year} {2021})}\BibitemShut {NoStop}%
\bibitem [{\citenamefont {Li}\ \emph {et~al.}(2023)\citenamefont {Li},
  \citenamefont {Vijay},\ and\ \citenamefont {Fisher}}]{li_entanglement_2023}%
  \BibitemOpen
  \bibfield  {author} {\bibinfo {author} {\bibfnamefont {Yaodong}\ \bibnamefont
  {Li}}, \bibinfo {author} {\bibfnamefont {Sagar}\ \bibnamefont {Vijay}}, \
  and\ \bibinfo {author} {\bibfnamefont {Matthew~P.A.}\ \bibnamefont
  {Fisher}},\ }\bibfield  {title} {\enquote {\bibinfo {title} {Entanglement
  {Domain} {Walls} in {Monitored} {Quantum} {Circuits} and the {Directed}
  {Polymer} in a {Random} {Environment}},}\ }\href {\doibase
  10.1103/PRXQuantum.4.010331} {\bibfield  {journal} {\bibinfo  {journal} {PRX
  Quantum}\ }\textbf {\bibinfo {volume} {4}},\ \bibinfo {pages} {010331}
  (\bibinfo {year} {2023})}\BibitemShut {NoStop}%
\bibitem [{\citenamefont {Ryu}\ and\ \citenamefont
  {Takayanagi}(2006{\natexlab{a}})}]{ryu_holographic_2006}%
  \BibitemOpen
  \bibfield  {author} {\bibinfo {author} {\bibfnamefont {Shinsei}\ \bibnamefont
  {Ryu}}\ and\ \bibinfo {author} {\bibfnamefont {Tadashi}\ \bibnamefont
  {Takayanagi}},\ }\bibfield  {title} {\enquote {\bibinfo {title} {Holographic
  {Derivation} of {Entanglement} {Entropy} from the anti--de {Sitter}
  {Space}/{Conformal} {Field} {Theory} {Correspondence}},}\ }\href {\doibase
  10.1103/PhysRevLett.96.181602} {\bibfield  {journal} {\bibinfo  {journal}
  {Physical Review Letters}\ }\textbf {\bibinfo {volume} {96}},\ \bibinfo
  {pages} {181602} (\bibinfo {year} {2006}{\natexlab{a}})}\BibitemShut
  {NoStop}%
\bibitem [{\citenamefont {Ryu}\ and\ \citenamefont
  {Takayanagi}(2006{\natexlab{b}})}]{ryu_aspects_2006}%
  \BibitemOpen
  \bibfield  {author} {\bibinfo {author} {\bibfnamefont {Shinsei}\ \bibnamefont
  {Ryu}}\ and\ \bibinfo {author} {\bibfnamefont {Tadashi}\ \bibnamefont
  {Takayanagi}},\ }\bibfield  {title} {\enquote {\bibinfo {title} {Aspects of
  holographic entanglement entropy},}\ }\href {\doibase
  10.1088/1126-6708/2006/08/045} {\bibfield  {journal} {\bibinfo  {journal}
  {Journal of High Energy Physics}\ }\textbf {\bibinfo {volume} {2006}},\
  \bibinfo {pages} {045} (\bibinfo {year} {2006}{\natexlab{b}})}\BibitemShut
  {NoStop}%
\bibitem [{\citenamefont {Akers}\ \emph {et~al.}(2022)\citenamefont {Akers},
  \citenamefont {Engelhardt}, \citenamefont {Harlow}, \citenamefont
  {Penington},\ and\ \citenamefont {Vardhan}}]{akers_black_2022}%
  \BibitemOpen
  \bibfield  {author} {\bibinfo {author} {\bibfnamefont {Chris}\ \bibnamefont
  {Akers}}, \bibinfo {author} {\bibfnamefont {Netta}\ \bibnamefont
  {Engelhardt}}, \bibinfo {author} {\bibfnamefont {Daniel}\ \bibnamefont
  {Harlow}}, \bibinfo {author} {\bibfnamefont {Geoff}\ \bibnamefont
  {Penington}}, \ and\ \bibinfo {author} {\bibfnamefont {Shreya}\ \bibnamefont
  {Vardhan}},\ }\bibfield  {title} {\enquote {\bibinfo {title} {The black hole
  interior from non-isometric codes and complexity},}\ }\href {\doibase
  10.48550/arXiv.2207.06536} {\bibfield  {journal} {\bibinfo  {journal}
  {arXiv:2207.06536}\ } (\bibinfo {year} {2022}),\
  10.48550/arXiv.2207.06536}\BibitemShut {NoStop}%
\bibitem [{\citenamefont {DeWolfe}\ and\ \citenamefont
  {Higginbotham}(2023)}]{dewolfe_non-isometric_2023}%
  \BibitemOpen
  \bibfield  {author} {\bibinfo {author} {\bibfnamefont {Oliver}\ \bibnamefont
  {DeWolfe}}\ and\ \bibinfo {author} {\bibfnamefont {Kenneth}\ \bibnamefont
  {Higginbotham}},\ }\bibfield  {title} {\enquote {\bibinfo {title}
  {Non-isometric codes for the black hole interior from fundamental and
  effective dynamics},}\ }\href {\doibase 10.48550/arXiv.2304.12345} {\bibfield
   {journal} {\bibinfo  {journal} {arXiv:2304.12345}\ } (\bibinfo {year}
  {2023}),\ 10.48550/arXiv.2304.12345}\BibitemShut {NoStop}%
\bibitem [{\citenamefont {Akers}\ \emph {et~al.}(2024)\citenamefont {Akers},
  \citenamefont {Bouland}, \citenamefont {Chen}, \citenamefont {Kohler},
  \citenamefont {Metger},\ and\ \citenamefont
  {Vazirani}}]{akers_holographicpseudoentanglement_2024}%
  \BibitemOpen
  \bibfield  {author} {\bibinfo {author} {\bibfnamefont {Chris}\ \bibnamefont
  {Akers}}, \bibinfo {author} {\bibfnamefont {Adam}\ \bibnamefont {Bouland}},
  \bibinfo {author} {\bibfnamefont {Lijie}\ \bibnamefont {Chen}}, \bibinfo
  {author} {\bibfnamefont {Tamara}\ \bibnamefont {Kohler}}, \bibinfo {author}
  {\bibfnamefont {Tony}\ \bibnamefont {Metger}}, \ and\ \bibinfo {author}
  {\bibfnamefont {Umesh}\ \bibnamefont {Vazirani}},\ }\bibfield  {title}
  {\enquote {\bibinfo {title} {Holographic pseudoentanglement and the
  complexity of the {AdS}/{CFT} dictionary},}\ }\href
  {https://arxiv.org/abs/2411.04978} {\  (\bibinfo {year} {2024})},\ \Eprint
  {http://arxiv.org/abs/2411.04978} {arXiv:2411.04978 [hep-th]} \BibitemShut
  {NoStop}%
\bibitem [{\citenamefont {Engelhardt}\ \emph {et~al.}(2024)\citenamefont
  {Engelhardt}, \citenamefont {Folkestad}, \citenamefont {Levine},
  \citenamefont {Verheijden},\ and\ \citenamefont
  {Yang}}]{engelhardt_spoofing_2024}%
  \BibitemOpen
  \bibfield  {author} {\bibinfo {author} {\bibfnamefont {Netta}\ \bibnamefont
  {Engelhardt}}, \bibinfo {author} {\bibfnamefont {Asmund}\ \bibnamefont
  {Folkestad}}, \bibinfo {author} {\bibfnamefont {Adam}\ \bibnamefont
  {Levine}}, \bibinfo {author} {\bibfnamefont {Evita}\ \bibnamefont
  {Verheijden}}, \ and\ \bibinfo {author} {\bibfnamefont {Lisa}\ \bibnamefont
  {Yang}},\ }\bibfield  {title} {\enquote {\bibinfo {title} {Spoofing
  {Entanglement} in {Holography}},}\ }\href
  {https://arxiv.org/abs/2407.14589v1} {\bibfield  {journal} {\bibinfo
  {journal} {arXiv:2407.14589v1}\ } (\bibinfo {year} {2024})}\BibitemShut
  {NoStop}%
\bibitem [{\citenamefont {Schuster}\ \emph {et~al.}(2024)\citenamefont
  {Schuster}, \citenamefont {Haferkamp},\ and\ \citenamefont
  {Huang}}]{schuster_random_2024}%
  \BibitemOpen
  \bibfield  {author} {\bibinfo {author} {\bibfnamefont {Thomas}\ \bibnamefont
  {Schuster}}, \bibinfo {author} {\bibfnamefont {Jonas}\ \bibnamefont
  {Haferkamp}}, \ and\ \bibinfo {author} {\bibfnamefont {Hsin-Yuan}\
  \bibnamefont {Huang}},\ }\bibfield  {title} {\enquote {\bibinfo {title}
  {Random unitaries in extremely low depth},}\ }\href
  {https://arxiv.org/abs/2407.07754v1} {\bibfield  {journal} {\bibinfo
  {journal} {arXiv:2407.07754v1}\ } (\bibinfo {year} {2024})}\BibitemShut
  {NoStop}%
\bibitem [{\citenamefont {Lami}\ \emph {et~al.}(2024)\citenamefont {Lami},
  \citenamefont {De~Nardis},\ and\ \citenamefont
  {Turkeshi}}]{lami_anticoncentration_2024}%
  \BibitemOpen
  \bibfield  {author} {\bibinfo {author} {\bibfnamefont {Guglielmo}\
  \bibnamefont {Lami}}, \bibinfo {author} {\bibfnamefont {Jacopo}\ \bibnamefont
  {De~Nardis}}, \ and\ \bibinfo {author} {\bibfnamefont {Xhek}\ \bibnamefont
  {Turkeshi}},\ }\bibfield  {title} {\enquote {\bibinfo {title}
  {Anticoncentration and state design of random tensor networks},}\ }\href
  {\doibase 10.48550/arXiv.2409.13023} {\bibfield  {journal} {\bibinfo
  {journal} {arXiv:2409.13023}\ } (\bibinfo {year} {2024}),\
  10.48550/arXiv.2409.13023}\BibitemShut {NoStop}%
\bibitem [{\citenamefont {Foss-Feig}\ \emph {et~al.}(2021)\citenamefont
  {Foss-Feig}, \citenamefont {Hayes}, \citenamefont {Dreiling}, \citenamefont
  {Figgatt}, \citenamefont {Gaebler}, \citenamefont {Moses}, \citenamefont
  {Pino},\ and\ \citenamefont {Potter}}]{foss-feig_holographic_2021}%
  \BibitemOpen
  \bibfield  {author} {\bibinfo {author} {\bibfnamefont {Michael}\ \bibnamefont
  {Foss-Feig}}, \bibinfo {author} {\bibfnamefont {David}\ \bibnamefont
  {Hayes}}, \bibinfo {author} {\bibfnamefont {Joan~M.}\ \bibnamefont
  {Dreiling}}, \bibinfo {author} {\bibfnamefont {Caroline}\ \bibnamefont
  {Figgatt}}, \bibinfo {author} {\bibfnamefont {John~P.}\ \bibnamefont
  {Gaebler}}, \bibinfo {author} {\bibfnamefont {Steven~A.}\ \bibnamefont
  {Moses}}, \bibinfo {author} {\bibfnamefont {Juan~M.}\ \bibnamefont {Pino}}, \
  and\ \bibinfo {author} {\bibfnamefont {Andrew~C.}\ \bibnamefont {Potter}},\
  }\bibfield  {title} {\enquote {\bibinfo {title} {Holographic quantum
  algorithms for simulating correlated spin systems},}\ }\href {\doibase
  10.1103/PhysRevResearch.3.033002} {\bibfield  {journal} {\bibinfo  {journal}
  {Phys. Rev. Research}\ }\textbf {\bibinfo {volume} {3}},\ \bibinfo {pages}
  {033002} (\bibinfo {year} {2021})}\BibitemShut {NoStop}%
\bibitem [{\citenamefont {Anand}\ \emph {et~al.}(2023)\citenamefont {Anand},
  \citenamefont {Hauschild}, \citenamefont {Zhang}, \citenamefont {Potter},\
  and\ \citenamefont {Zaletel}}]{anand_holographic_2023}%
  \BibitemOpen
  \bibfield  {author} {\bibinfo {author} {\bibfnamefont {Sajant}\ \bibnamefont
  {Anand}}, \bibinfo {author} {\bibfnamefont {Johannes}\ \bibnamefont
  {Hauschild}}, \bibinfo {author} {\bibfnamefont {Yuxuan}\ \bibnamefont
  {Zhang}}, \bibinfo {author} {\bibfnamefont {Andrew~C.}\ \bibnamefont
  {Potter}}, \ and\ \bibinfo {author} {\bibfnamefont {Michael~P.}\ \bibnamefont
  {Zaletel}},\ }\bibfield  {title} {\enquote {\bibinfo {title} {Holographic
  {Quantum} {Simulation} of {Entanglement} {Renormalization} {Circuits}},}\
  }\href {\doibase 10.1103/PRXQuantum.4.030334} {\bibfield  {journal} {\bibinfo
   {journal} {PRX Quantum}\ }\textbf {\bibinfo {volume} {4}},\ \bibinfo {pages}
  {030334} (\bibinfo {year} {2023})}\BibitemShut {NoStop}%
\bibitem [{\citenamefont {van~den Berg}(2021)}]{Berg_A_2021}%
  \BibitemOpen
  \bibfield  {author} {\bibinfo {author} {\bibfnamefont {Ewout}\ \bibnamefont
  {van~den Berg}},\ }\bibfield  {title} {\enquote {\bibinfo {title} {A simple
  method for sampling random clifford operators},}\ }in\ \href {\doibase
  10.1109/QCE52317.2021.00021} {\emph {\bibinfo {booktitle} {IEEE International
  Conference on Quantum Computing and Engineering, QCE 2021, Broomfield, CO,
  USA, October 17-22, 2021}}},\ \bibinfo {editor} {edited by\ \bibinfo {editor}
  {\bibfnamefont {Hausi~A.}\ \bibnamefont {Müller}}, \bibinfo {editor}
  {\bibfnamefont {Greg}\ \bibnamefont {Byrd}}, \bibinfo {editor} {\bibfnamefont
  {Candace}\ \bibnamefont {Culhane}}, \ and\ \bibinfo {editor} {\bibfnamefont
  {Travis}\ \bibnamefont {Humble}}}\ (\bibinfo  {publisher} {IEEE},\ \bibinfo
  {year} {2021})\ pp.\ \bibinfo {pages} {54--59}\BibitemShut {NoStop}%
\bibitem [{\citenamefont {Collins}\ and\ \citenamefont
  {Matsumoto}(2017)}]{collins_weingarten_2017}%
  \BibitemOpen
  \bibfield  {author} {\bibinfo {author} {\bibfnamefont {Benoit}\ \bibnamefont
  {Collins}}\ and\ \bibinfo {author} {\bibfnamefont {Sho}\ \bibnamefont
  {Matsumoto}},\ }\bibfield  {title} {\enquote {\bibinfo {title} {Weingarten
  calculus via orthogonality relations: new applications},}\ }\href {\doibase
  10.30757/ALEA.v14-31} {\bibfield  {journal} {\bibinfo  {journal} {Latin
  American Journal of Probability and Mathematical Statistics}\ }\textbf
  {\bibinfo {volume} {14}},\ \bibinfo {pages} {631} (\bibinfo {year}
  {2017})}\BibitemShut {NoStop}%
\end{thebibliography}%
\let\addcontentsline\oldaddcontentsline

\clearpage
\widetext

\setcounter{equation}{0}
\setcounter{figure}{0}
\setcounter{table}{0}
\setcounter{page}{1}
\makeatletter
\renewcommand{\thesection}{S\arabic{section}}
\renewcommand{\theequation}{S\arabic{equation}}
\renewcommand{\thefigure}{S\arabic{figure}}

\begin{center}
\textbf{\large Supplementary Information: Pseudoentanglement from tensor networks} \\~\\
Zihan Cheng, Xiaozhou Feng, and Matteo Ippoliti \\
\textit{Department of Physics, The University of Texas at Austin, Austin, TX 78712, USA}
\end{center}

\tableofcontents

\section{Bounds on Weingarten functions \label{sec:si_wg}}

Here we prove several facts about the Weingarten function which are used in the main text toward proving Theorem 1. 
To this end, it is helpful to first recall the relationship between Weingarten functions and the Gram matrix of permutations, $G_{\sigma,\tau} \equiv {\rm Tr}(\hat{\sigma} \hat{\tau}^{-1}) = d^{|\sigma\tau^{-1}|}$. Here $d$ is the dimension of the underlying Hilbert space, $\sigma,\tau\in S_m$ are permutations, $m$ is the number of copies (or replicas) of the Hilbert space, and $|\sigma|\in \{1,\dots m\}$ denotes the number of cycles in permutation $\sigma$. 
We have $\textsf{Wg}_d(\sigma\tau^{-1}) = (G^{-1})_{\sigma,\tau}$, with $G^{-1}$ the matrix inverse of $G$, viewed as a $m!\times m!$ square matrix. The matrix is invertible as long as $m < d$, which is the regime of interest to us. 

\begin{fact} \label{fact:G_sum}
    We have $\sum_{\beta\in S_m} d^{|\beta|} = d^m [1+O(m^2/d)].$
\end{fact}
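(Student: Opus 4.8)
The plan is to compute the sum in closed form via a classical symmetric-group identity and then extract the leading asymptotics. Recall that the number of permutations in $S_m$ with exactly $k$ cycles is an unsigned Stirling number of the first kind, and that these counts obey the generating identity $\sum_{\beta\in S_m} x^{|\beta|} = x(x+1)\cdots(x+m-1)$ (a rising factorial). Setting $x=d$ gives $\sum_{\beta\in S_m} d^{|\beta|} = \prod_{j=0}^{m-1}(d+j) = d^m \prod_{j=1}^{m-1}\!\left(1+\tfrac{j}{d}\right)$, which is precisely $1/f_{d,m}$ in the notation of fact (i) of the main text---so the statement is equivalent to the ``birthday asymptotics'' $f_{d,m} = d^{-m}[1+O(m^2/d)]$ already invoked there.

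It then remains to estimate $P \equiv \prod_{j=1}^{m-1}(1+j/d)$ in the relevant regime $m^2/d = o(1)$ (the only one we need, since $m \le O(\mathrm{poly}(N))$ while the bond dimension satisfies $\chi = \omega(\mathrm{poly}(N))$, and $d \ge \chi$). Writing $P = \exp\!\big(\sum_{j=1}^{m-1}\log(1+j/d)\big)$ and using $t - t^2/2 \le \log(1+t)\le t$ for $t\ge 0$, the exponent equals $\tfrac{m(m-1)}{2d} + O(m^3/d^2) = O(m^2/d)$, hence $P = 1 + O(m^2/d)$. This step is routine and I anticipate no real obstacle; the only thing to be careful about is that the ``$O$'' is uniform in the stated scaling window.

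As a remark, there is an alternative route that bypasses the Stirling identity and is more robust for the tensor-network generalizations: isolate the identity term, $d^{|e|}=d^m$, and bound the rest. Any $\beta\neq e$ has $|\beta| = m-k$ with $k\ge 1$ equal to its minimal transposition length, so $\beta$ is the product of some ordered sequence of $k$ transpositions; since there are at most $\binom{m}{2}^k \le m^{2k}$ such sequences, the number of $\beta$ with $|\beta|=m-k$ is at most $m^{2k}$. Therefore $\sum_{\beta\neq e} d^{|\beta|} \le \sum_{k\ge 1} m^{2k} d^{m-k} = d^m\sum_{k\ge 1}(m^2/d)^k = d^m\cdot O(m^2/d)$ whenever $m^2/d<1$, which gives the claim directly. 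I would present the closed-form computation as the main argument and mention this combinatorial bound as the technique that carries over to the more general partition-function estimates appearing later.
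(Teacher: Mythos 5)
Your proposal is correct and follows essentially the same route as the paper: both reduce the sum to the exact rising factorial $\prod_{j=0}^{m-1}(d+j)$ and then extract the $1+O(m^2/d)$ factor by birthday-type asymptotics, the only difference being that you obtain the closed form from the Stirling-number generating identity while the paper gets it by writing $d^{|\beta|}=\Tr(\hat\beta)$ and identifying $\sum_\beta \hat\beta$ with $m!$ times the projector onto the symmetric subspace. Your supplementary combinatorial bound (counting permutations at transposition distance $k$ from the identity) is a valid alternative and is indeed closer in spirit to the bond-spin estimates used later in the paper.
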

\begin{proof}
    Noting that $d^{|\beta|} = {\rm Tr}(\hat{\beta})$, where $\hat{\beta}$ is the replica permutation operator associated to $\beta\in S_m$, we have 
    \begin{equation}
        \sum_{\beta\in S_m} d^{|\beta|}
        = {\rm Tr} \left(\sum_{\beta\in S_m} \hat{\beta} \right)
        = m!\, {\rm Tr}(\hat{\Pi}_{\rm sym}) = \frac{(d-1+m)!}{(d-1)!},
    \end{equation}
    where $\hat{\Pi}_{\rm sym} = (1/m!)\sum_\beta \hat{\beta}$ is the projector on the symmetric sector of the $m$-copy Hilbert space, which has dimension $\binom{d-1+m}{m}$. 
    The statement follows from the `birthday statistics': $\log[(d-1+m)! / (d-1)!] = m \log(d-1) + \binom{m}{2}\frac{1}{d-1} + o(1/d)$.
\end{proof}

\begin{fact} \label{fact:wg_sum_signed}
    We have $\sum_{\beta\in S_m} d^m \textsf{Wg}_d (\beta) = 1+O(m^2/d)$.
\end{fact}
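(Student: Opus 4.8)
The plan is to use the fact that the Weingarten function is, by definition, the matrix inverse of the Gram matrix $G_{\sigma,\tau}=d^{|\sigma\tau^{-1}|}$ of the replica permutations, so that summing the inverse relation over all permutations collapses it to a scalar identity that is then resolved by Fact~\ref{fact:G_sum}.

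The first step is to write the inverse relation in index form: for all $\sigma,\pi\in S_m$ (with $m<d$, where $G$ is invertible),
\begin{equation}
\sum_{\tau\in S_m}\textsf{Wg}_d(\sigma\tau^{-1})\,d^{|\tau\pi^{-1}|}=\delta_{\sigma,\pi},
\end{equation}
and specialize to $\pi=e$, giving $\sum_{\tau}\textsf{Wg}_d(\sigma\tau^{-1})\,d^{|\tau|}=\delta_{\sigma,e}$. The second step is to sum this over $\sigma\in S_m$: the right-hand side becomes $1$, while on the left-hand side I exchange the order of summation and, for each fixed $\tau$, substitute $\beta=\sigma\tau^{-1}$, which is a bijection of $S_m$ onto itself, so that the inner sum equals $\sum_{\beta}\textsf{Wg}_d(\beta)$ independently of $\tau$. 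This produces the factorized identity
\begin{equation}
\Big(\sum_{\beta\in S_m}\textsf{Wg}_d(\beta)\Big)\Big(\sum_{\tau\in S_m}d^{|\tau|}\Big)=1 .
\end{equation}

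The final step is to invoke Fact~\ref{fact:G_sum}, namely $\sum_{\tau}d^{|\tau|}=d^m[1+O(m^2/d)]$; dividing gives $\sum_{\beta}\textsf{Wg}_d(\beta)=d^{-m}[1+O(m^2/d)]^{-1}=d^{-m}[1+O(m^2/d)]$, and multiplying through by $d^m$ yields the claim. I do not expect a genuine obstacle here; the only point deserving care is that the $O(m^2/d)$ error term must be small enough for the expansion of $[1+O(m^2/d)]^{-1}$ to be valid, which holds in the regime of interest since $m\le O(\mathrm{poly}(N))$ while $d\ge\chi=\omega(\mathrm{poly}(N))$, so $m^2/d=o(1)$.
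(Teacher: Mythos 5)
Your proof is correct and essentially matches the paper's: both arguments reduce the claim to the factorized identity $\bigl(\sum_{\beta}\textsf{Wg}_d(\beta)\bigr)\bigl(\sum_{\tau}d^{|\tau|}\bigr)=1$ and then invoke Fact~\ref{fact:G_sum}. The only cosmetic difference is the starting point — you sum the defining relation $\textsf{Wg}_d=(G^{-1})$ against the Gram matrix directly, whereas the paper obtains the same identity from trace preservation of the Haar twirl applied to $\ketbra{0}^{\otimes m}$.
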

\begin{proof}
    We have that $1 = {\rm Tr} \mathbb{E}_{U\sim {\rm Haar}}[U^{\otimes m}\ketbra{0} (U^\dagger)^{\otimes m}] = \sum_{\sigma,\tau\in S_m} \textsf{Wg}_d(\tau\sigma^{-1}) d^{|\sigma|}$. Changing summation variable to $\tau\mapsto \tau' = \tau \sigma^{-1}$ yields
    \begin{equation}
        \sum_{\tau' \in S_m} {\sf Wg}_d(\tau') = \left(\sum_{\sigma\in S_m} d^{|\sigma|}\right)^{-1} = \frac{(d-1)!}{(d-1+m)!} = d^{-m}[1+O(m^2/d)]
    \end{equation}
    by Fact~\ref{fact:G_sum}.
\end{proof}

\begin{fact} \label{fact:wg_abs}
    For all permutations $\sigma\in S_m$, we have 
    \begin{equation} 
    |\textsf{Wg}_d (\sigma)| = d^{|\sigma|-2m} f(\sigma) [1+O(m^2/d)],
    \end{equation}
    with $f(\sigma) = \prod_{i=1}^{|\sigma|} \frac{1}{\ell_i}\binom{2(\ell_i-1)}{\ell_i-1}$, and $\{\ell_i\}_{i=1}^{|\sigma|}$ the lengths of all cycles in $\sigma$. 
    In particular $\textsf{Wg}_d(e) = d^{-m} [1+O(m^2/d)]$.
\end{fact}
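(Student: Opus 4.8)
The plan is to prove the asymptotic formula for the Weingarten function $\textsf{Wg}_d(\sigma)$ by relying on the known exact expansion of Weingarten functions in inverse powers of $d$, due to Collins and collaborators~\cite{collins_weingarten_2017}. Recall that for a permutation $\sigma \in S_m$ with cycle type $\{\ell_i\}_{i=1}^{|\sigma|}$, the Weingarten function factorizes over cycles at leading order: $\textsf{Wg}_d(\sigma) = \prod_{i=1}^{|\sigma|} \textsf{Wg}_d(c_{\ell_i})$ up to $O(1/d^2)$ relative corrections, where $c_\ell$ is a single $\ell$-cycle, and the single-cycle Weingarten function has the expansion $\textsf{Wg}_d(c_\ell) = (-1)^{\ell-1} C_{\ell-1}\, d^{1-2\ell}[1 + O(1/d^2)]$ with $C_{\ell-1} = \frac{1}{\ell}\binom{2(\ell-1)}{\ell-1}$ the Catalan number. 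First I would state this input precisely, citing the relevant formula, and note that the sign $(-1)^{|\sigma|-m}$ is fixed by the parity of $\sigma$ (since $\sum_i (\ell_i - 1) = m - |\sigma|$).

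Next I would take absolute values, which removes the sign entirely and leaves $|\textsf{Wg}_d(\sigma)| = \prod_i C_{\ell_i - 1} \, d^{|\sigma| - 2m}[1 + (\text{corrections})]$, since $\sum_i (1 - 2\ell_i) = |\sigma| - 2m$. This matches the claimed form with $f(\sigma) = \prod_i \frac{1}{\ell_i}\binom{2(\ell_i-1)}{\ell_i-1}$. The remaining task is to control the error term uniformly over all $\sigma \in S_m$ and show it is $O(m^2/d)$ rather than merely $O(1/d)$ with an $m$-dependent constant. Here I would argue as follows: each cycle contributes a multiplicative factor $[1 + O(\ell_i^2/d)]$ (the sub-leading term in the single-cycle expansion, together with the leading correction to the factorization-over-cycles identity, is controlled by the cycle length), and the product over cycles of such factors is $1 + O\!\big(\sum_i \ell_i^2 / d\big)$. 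Since $\sum_i \ell_i \le m$ and each $\ell_i \le m$, we have $\sum_i \ell_i^2 \le m \sum_i \ell_i \le m^2$, giving the stated uniform bound $[1 + O(m^2/d)]$. The final sentence, $\textsf{Wg}_d(e) = d^{-m}[1+O(m^2/d)]$, follows immediately by specializing to the identity: $|e| = m$, all cycles have length $\ell_i = 1$, so $f(e) = 1$ and $d^{|e| - 2m} = d^{-m}$; the positivity of $\textsf{Wg}_d(e)$ removes the absolute value.

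The main obstacle I anticipate is making the uniform-in-$\sigma$ error control rigorous. The naive per-cycle expansion gives an error of the form $1 + O(1/d)$ for each of up to $m$ cycles, and multiplying $m$ such factors could a priori produce an $O(m/d)$ or worse error if the implied constants did not decay; the claim asserts $O(m^2/d)$, which requires the sharper per-cycle estimate $1 + O(\ell_i^2/d)$ and the combinatorial inequality $\sum \ell_i^2 \le m^2$. I would therefore need to either extract the explicit $\ell$-dependence of the next-order term from the known expansion of $\textsf{Wg}_d(c_\ell)$ (which involves higher Catalan-type numbers and is $O(\ell^2)$ times the leading term divided by $d^2$ — careful bookkeeping needed), or invoke a known uniform bound such as $|\textsf{Wg}_d(\sigma)| \le (1 + O(m^2/d))\, d^{|\sigma|-2m} f(\sigma)$ from the literature and separately establish the matching lower bound. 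A clean alternative, which I would mention as a fallback, is to prove the two bracketing inequalities directly using the recursion/orthogonality relations for Weingarten functions together with Fact~\ref{fact:wg_sum_signed} and Fact~\ref{fact:G_sum} already established, avoiding the full asymptotic expansion altogether.
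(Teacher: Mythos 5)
Your proposal is correct and follows essentially the same route as the paper, which simply cites the known asymptotic expansion of the Weingarten function from Collins et al.\ (leading term $(-1)^{m-|\sigma|} d^{|\sigma|-2m}\prod_i C_{\ell_i-1}$ with a uniform-in-$m$ relative error) and then specializes to $\sigma=e$ exactly as you do. Your additional discussion of how to make the error term uniformly $O(m^2/d)$ over all of $S_m$ is a reasonable fleshing-out of what the citation is carrying, and the fallback via the orthogonality relations is sound but unnecessary here.
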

\begin{proof}
    See Ref.~\cite{collins_weingarten_2017}. Note that our usage of the notation for $|\sigma|$ differs from theirs. The second statement follows from the fact that $e$ has $m$ cycles of length 1, so $f(e) = 1$. 
\end{proof}

\begin{fact} \label{fact:wg_sum_unsigned}
    We have  $\sum_{\beta\in S_m} d^m |\textsf{Wg}_d(\beta)| = 1 + O(m^2/d)$.
\end{fact}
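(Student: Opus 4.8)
The plan is to turn the statement into a purely combinatorial estimate over $S_m$ using Fact~\ref{fact:wg_abs}, and then bound that estimate by organizing permutations according to their distance from the identity.

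\smallskip

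\noindent\emph{Reduction.} By Fact~\ref{fact:wg_abs} the ratio $d^m|\textsf{Wg}_d(\beta)|/\big(d^{|\beta|-m}f(\beta)\big)$ equals $1+O(m^2/d)$ uniformly over $\beta\in S_m$, so
\begin{equation}
\sum_{\beta\in S_m} d^m|\textsf{Wg}_d(\beta)| = \big[1+O(m^2/d)\big]\sum_{\beta\in S_m} d^{|\beta|-m}f(\beta),
\end{equation}
and it suffices to prove $\sum_{\beta\in S_m} d^{|\beta|-m}f(\beta) = 1+O(m^2/d)$. The term $\beta=e$ contributes $d^{0}f(e)=1$; every other term is positive and carries at least one factor $1/d$, since $k:=m-|\beta|\ge 1$ whenever $\beta\neq e$.

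\smallskip

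\noindent\emph{Combinatorial bound.} Group the nontrivial permutations by $k=m-|\beta|\ge1$ and set $S_{m,k}:=\sum_{\beta:\,m-|\beta|=k}f(\beta)$, so the remainder is $\sum_{k\ge1}d^{-k}S_{m,k}$. The key point is that a permutation with $m-|\beta|=k$ acts nontrivially only on a ``support'' of size $j$ with $k+1\le j\le 2k$ (each nontrivial cycle of length $\ell$ adds $\ell-1\ge\ell/2$ to $k$). One then bounds $S_{m,k}$ by the product of: (i) the number of ways to choose the support, $\binom{m}{j}\le m^{j}/j!\le m^{2k}/j!$; (ii) the number of fixed-point-free permutations of $[j]$ with $j-k$ cycles, which is at most the signless Stirling number $c(j,j-k)=e_k(1,\dots,j-1)\le \binom{j}{2}^{k}/k!$; and (iii) the maximal Catalan weight $f(\beta)=\prod_i \mathrm{Cat}(\ell_i-1)\le \prod_i 4^{\ell_i-1}=4^{k}$. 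Using $j\le 2k$, $\binom{j}{2}\le 2k^2$, and $k!(k+1)!\ge (k/e)^{2k}$, the factorials absorb the $k^{2k}$ growth and one gets $d^{-k}S_{m,k}\le 2\,(Cm^2/d)^{k}$ for an absolute constant $C$. The $k=1$ term is exactly $\binom{m}{2}/d$ (only single transpositions, each with $f=\mathrm{Cat}(1)=1$), and the tail $\sum_{k\ge2}$ is $O\big((m^2/d)^2\big)$ in the regime $m^2/d=o(1)$ relevant here. Hence $\sum_{k\ge1}d^{-k}S_{m,k}=O(m^2/d)$, giving $\sum_{\beta} d^{|\beta|-m}f(\beta)=1+O(m^2/d)$, which proves the claim.

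\smallskip

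\noindent\emph{Main obstacle.} The delicate step is the estimate on $S_{m,k}$: the $k$-dependence must come out as $(\mathrm{const}\cdot m^2/d)^{k}$ rather than $(\mathrm{const}\cdot k^2 m^2/d)^{k}$, so that the geometric series converges and is dominated by $k=1$; the $k^{2k}$-type factors produced by $\binom{m}{j}$ and $\binom{j}{2}^{k}$ just barely get cancelled by the factorials $j!$ and $k!$. As a shortcut, one may instead invoke the known fact that $\mathrm{sgn}\,\textsf{Wg}_d(\sigma)=\mathrm{sgn}(\sigma)$ for $d>m$~\cite{collins_weingarten_2017}, so that $\sum_\beta |\textsf{Wg}_d(\beta)|=\sum_\beta \mathrm{sgn}(\beta)\textsf{Wg}_d(\beta)$; the latter equals $(d-m)!/d!=\prod_{j=0}^{m-1}(d-j)^{-1}$ by repeating the computation of Fact~\ref{fact:wg_sum_signed} with the \emph{anti}symmetric projector in place of the symmetric one, and multiplying by $d^m$ gives $\prod_{j=0}^{m-1}(1-j/d)^{-1}=\exp\!\big(\tbinom{m}{2}/d+O(m^3/d^2)\big)=1+O(m^2/d)$.
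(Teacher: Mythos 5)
Your proof is correct, and it shares the paper's starting point --- reducing $\sum_\beta d^m|\textsf{Wg}_d(\beta)|$ to $\sum_\beta d^{|\beta|-m}f(\beta)$ via Fact~\ref{fact:wg_abs} --- but then diverges in how that combinatorial sum is controlled. The paper's route is shorter: it bounds $f(\beta)\le 4^{m-|\beta|}$ (the same Catalan bound you use in step (iii)) so that $\sum_\beta d^{|\beta|-m}f(\beta)\le\sum_\beta (d/4)^{|\beta|-m}$, and then simply invokes Fact~\ref{fact:G_sum} at the shifted dimension $d/4$ to get $1+O(m^2/d)$; the lower bound is obtained for free from Fact~\ref{fact:wg_sum_signed} by dropping absolute values. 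Your cycle-type decomposition by $k=m-|\beta|$, with the support-size bookkeeping, the Stirling-number bound $c(j,j-k)=e_k(1,\dots,j-1)\le\binom{j}{2}^k/k!$, and the Stirling-formula cancellation of the $k^{2k}$ growth, is essentially a hands-on re-derivation of Fact~\ref{fact:G_sum} at dimension $d/4$; it is more laborious but also more informative, since it isolates the $k=1$ (transposition) term $\binom{m}{2}/d$ as the dominant contribution and makes the geometric decay in $k$ explicit. (A small remark: you use Fact~\ref{fact:wg_abs} as a two-sided uniform asymptotic to write an equality with a single $[1+O(m^2/d)]$ factor; the paper only needs the one-sided uniform bound for its upper estimate and handles the lower bound separately, but your usage is consistent with the Fact as stated.) Your closing shortcut via $\operatorname{sgn}\textsf{Wg}_d(\sigma)=\operatorname{sgn}(\sigma)$ is a genuinely different and elegant argument --- it yields the exact value $d^m(d-m)!/d!$ for the unsigned sum by repeating the Fact~\ref{fact:wg_sum_signed} computation with the antisymmetric projector --- though it imports a nontrivial positivity result (the monotone-Hurwitz/Jucys--Murphy expansion of the Weingarten function) that the paper never needs.
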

\begin{proof}
    First we have a lower bound $\sum_{\beta\in S_m} d^m |\textsf{Wg}_d(\beta)| \geq \sum_{\beta\in S_m} d^m \textsf{Wg}_d(\beta) = 1+O(m^2/d)$, by Fact~\ref{fact:wg_sum_signed}. 
    Second, we can prove an upper bound by invoking Fact~\ref{fact:wg_abs}:
    \begin{align}
        \sum_{\beta\in S_m} d^m |\textsf{Wg}_d(\beta)| 
        & \leq [1+O(m^2/d)] \sum_{\beta\in S_m} d^{|\beta|-m} f(\beta) \nonumber \\
        & \leq [1+O(m^2/d)] \sum_{\beta\in S_m} (d/4)^{|\beta|-m} \nonumber \\
        & \leq 1+O(m^2/d).
    \end{align}
    In the second inequality we used the fact that $\frac{1}{\ell} \binom{2(\ell-1)}{\ell-1} \leq 4^{\ell-1}$, so that $f(\beta) \leq 4^{\sum_i(\ell_i-1)} = 4^{m-|\beta|}$. We then used Fact~\ref{fact:G_sum} to carry out the sum over $\beta$. 
\end{proof}

\section{Computational indistinguishability between $\rpfcmps$ and $\rhmps$ \label{sec:si_pmps_rmps}}

Here we give a proof of Lemma~1, stating the indistinguishability between the ensemble of MPS made out of Haar-random gates, $\rhmps$, and the analogous ensemble where the MPS is made out of random `PFC' gates, $\rpfcmps$. 
To this end, we use a key property of the PFC ensemble: 
\begin{fact}\label{fact:PFC}
    Given an arbitrary state $\rho$ on $\mathcal{H}_A^{\otimes m} \otimes \mathcal{H}_B$, with $\mathcal{H}_A$ a Hilbert space of dimension $d$ and $\mathcal{H}_B$ an arbitrary Hilbert space, we have
    \begin{equation}
        \| \Phi^{(m)}_{\rm Haar}\otimes \mathcal{I} (\rho) - \Phi^{(m)}_{\rm rPFC} \otimes \mathcal{I} (\rho) \|_{\rm tr} \leq O(m/\sqrt{d}), \label{eq:PFC_property}
    \end{equation}
    where $\Phi^{(m)}_{\rm Haar}\otimes \mathcal{I} (\rho) = \mathbb{E}_{U\sim {\rm Haar}}[(U^{\otimes m}\otimes I)\rho (U^{\otimes m} \otimes I)^\dagger]$ is the $m$-fold twirling channel for the Haar ensemble (acting only on the $\mathcal{H}_A^{\otimes m}$ factor of the Hilbert space), and $\Phi^{(m)}_{\rm rPFC}$ is the same for the random PFC ensemble. 
\end{fact}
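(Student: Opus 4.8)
The plan is to follow the analysis of the PFC ensemble in Ref.~\cite{metger_simple_2024}, of which this statement is essentially the central approximation lemma. Write the random PFC unitary on $\mathcal{H}_A$ (dimension $d$) as $U = C F P$ (Clifford applied last), where $P$ permutes the $d$ computational basis states by a uniformly random permutation, $F = \sum_x (-1)^{f(x)}\ketbra{x}$ with $f$ a uniformly random Boolean function, and $C$ is a uniformly random Clifford; the three are independent. Then the $m$-fold twirling channel factorizes as $\Phi^{(m)}_{\rm rPFC} = \Phi^{(m)}_{\rm C}\circ\Phi^{(m)}_{\rm PF}$, with $\Phi^{(m)}_{\rm PF}$ the twirl by $FP$ and $\Phi^{(m)}_{\rm C}$ the Clifford twirl; and by left-invariance of the Haar measure one also has $\Phi^{(m)}_{\rm Haar} = \Phi^{(m)}_{\rm Haar}\circ\Phi^{(m)}_{\rm PF}$. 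Hence $\Phi^{(m)}_{\rm Haar} - \Phi^{(m)}_{\rm rPFC} = (\Phi^{(m)}_{\rm Haar} - \Phi^{(m)}_{\rm C})\circ\Phi^{(m)}_{\rm PF}$, so, tensoring with $\mathcal{I}$ throughout, it suffices to bound how far the Clifford twirl is from the Haar twirl when applied to inputs of the restricted form produced by $\Phi^{(m)}_{\rm PF}\otimes\mathcal{I}$.

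Next I would work out $\Phi^{(m)}_{\rm PF}$ on a basis dyad $\ketbra{x_1\cdots x_m}{y_1\cdots y_m}$: averaging the phases over $f$ gives a factor $\mathbb{E}_f[(-1)^{\sum_i f(x_i)+\sum_i f(y_i)}]$, which equals $1$ exactly when every basis value occurs with even total multiplicity in the combined tuple $(x_1,\dots,x_m,y_1,\dots,y_m)$ and $0$ otherwise, after which the permutation average symmetrizes over the choice of basis values. Thus $\Phi^{(m)}_{\rm PF}\otimes\mathcal{I}$ sends any input into the span of tensor products of such ``even-pairing'' symmetrized basis dyads with arbitrary reference operators; up to a ``birthday'' correction of size $O(m^2/d)$ from collisions among the $x_i$, this is dominated by the collision-free sector, where the even-multiplicity condition forces $\{y_i\}=\{x_i\}$ with all entries distinct, i.e.\ $(y_i)=(x_{\tau(i)})$ for some $\tau\in S_m$. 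On that sector the output is a combination of replica-permutation operators $\hat{\tau}$ restricted to distinct tuples, tensored with reference operators.

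It then remains to apply $\Phi^{(m)}_{\rm C} - \Phi^{(m)}_{\rm Haar}$ to this structured operator. For $m\le 3$ the Clifford group is an exact unitary $3$-design, so the difference vanishes identically; in particular the random PFC ensemble is an exact $2$-design, a fact used elsewhere in this work. For general $m$ one expands the Clifford twirl over the higher-order Clifford commutant, spanned by the replica permutations together with a small family of additional ``parity-type'' operators: the $S_m$ part reproduces exactly the Haar (Weingarten) answer on the distinct-tuple sector, while each extra commutant element, contracted against that sector, contributes with a normalization suppressed by $\sqrt{d}$. Combining this $O({\rm poly}(m)/\sqrt{d})$ correction with the earlier $O(m^2/d)$ collision correction yields the stated $O(m/\sqrt{d})$ bound. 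Since $\Phi^{(m)}_{\rm PF}$, $\Phi^{(m)}_{\rm C}$ and $\Phi^{(m)}_{\rm Haar}$ all act only on $\mathcal{H}_A^{\otimes m}$ and the estimates are made directly on the superoperator $\Phi^{(m)}_{\rm Haar}-\Phi^{(m)}_{\rm rPFC}$ (a diamond-norm statement), the arbitrary reference space $\mathcal{H}_B$ is carried through with no loss.

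The main obstacle is precisely that last point: one needs a bound robust to an input arbitrarily entangled with the reference, not merely a one-to-one trace-norm bound on $\Phi^{(m)}_{\rm C}-\Phi^{(m)}_{\rm Haar}$. A bare moment calculation does not deliver this; one must track the full operator difference and certify that its contribution to every reference-stabilized test state is small, which is what the careful accounting of the Clifford commutant and its $d$-scalings in Ref.~\cite{metger_simple_2024} accomplishes---and which I would invoke rather than reprove.
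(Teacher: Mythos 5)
Your proposal is correct and ultimately rests on the same foundation as the paper: both defer to Theorem~5.2 of Metger \emph{et al.} for the substantive bound, and your sketch of that theorem's internals (the decomposition $\Phi^{(m)}_{\rm Haar}-\Phi^{(m)}_{\rm rPFC}=(\Phi^{(m)}_{\rm Haar}-\Phi^{(m)}_{\rm C})\circ\Phi^{(m)}_{\rm PF}$ via left-invariance, the distinct-tuple sector, the Clifford part) is a faithful outline of the cited argument rather than a different route. The one ingredient the paper states explicitly that you only gesture at is the passage from the pure-state version of that theorem to arbitrary mixed $\rho$, which it handles by enlarging $\mathcal{H}_B$ to purify $\rho$ and using monotonicity of the trace distance under partial trace; your diamond-norm framing subsumes this, but it is worth recording since the reference's statement is literally for pure states.
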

\begin{proof}
    See Ref.~\cite{metger_simple_2024}, Theorem 5.2. Note that the theorem is proven for pure states; we can obtain this statement by expanding the auxiliary Hilbert space $\mathcal{H}_B$ to purify $\rho$, and using monotonicity of the trace distance under the partial trace.
\end{proof}

The moment operators for the two ensembles $\rpfcmps$, $\rhmps$ can be written as
\begin{align}
    \rho^{(m)}_{\rhmps} 
    & = \Phi^{(m)}_{\rm Haar}|_{[L,L+\nu]} \circ \Phi^{(m)}_{\rm Haar}|_{[L-1,L+\nu-1]} \circ \cdots \circ \Phi^{(m)}_{\rm Haar}|_{[1,\nu+1]} (\rho_0), \\ 
    \rho^{(m)}_{\rpfcmps} 
    & = \Phi^{(m)}_{\rm rPFC}|_{[L,L+\nu]} \circ \Phi^{(m)}_{\rm rPFC}|_{[L-1,L-1+\nu]} \circ \cdots \circ \Phi^{(m)}_{\rm rPFC}|_{[1,\nu+1]} (\rho_0),
\end{align}
where $\rho_0 = (\ketbra{0})^{\otimes mN}$ is the $m$-fold replicated input state and $\Phi^{(m)}|_{[i,j]}$ denotes an $m$-fold twirling channel acting only on qubits $i$ through $j$ on the chain (the identity channel on all other qubits is implicit). In our staircase circuits, gates act on $\nu+1$ consecutive qubits and move over by 1 qubit at each time step. 

We aim to prove that 
\begin{equation}
    \|\rho^{(m)}_{\rhmps} - \rho^{(m)}_{\rpfcmps} \|_{\rm tr} \leq L\epsilon,
    \label{eq:si_upperbound_Leps}
\end{equation}
with $\epsilon$ the approximation error of the random PFC ensemble on $\nu+1$ qubits: $\epsilon \leq O(m/\sqrt{\chi})$, where $\chi = 2^\nu$ is the MPS bond dimension. 
We will prove the statement by induction.

\begin{itemize}
    \item ($L=1$): The whole system consists of $\nu+1$ qubits. We have $\|\Phi^{(m)}_{\rm Haar} (\rho_0)- \Phi^{(m)}_{\rm rPFC} (\rho_0) \|_{\rm tr} \leq \epsilon$. 
    \item ($L-1\implies L$): Let 
    \begin{align} 
    \rho_1 & = \Phi^{(m)}_{\rm Haar}|_{[L-1,L-1+\nu]} \circ \cdots \circ \Phi^{(m)}_{\rm Haar}|_{[1,\nu+1]} (\rho_0), \\ 
    \rho_2 & = \Phi^{(m)}_{\rm rPFC}|_{[L-1,L-1+\nu]} \circ \cdots \circ \Phi^{(m)}_{\rm rPFC}|_{[1,\nu+1]} (\rho_0).
    \end{align}
    We have, by triangle inequality,
    \begin{align}
        \| \Phi^{(m)}_{\rm Haar}|_{[L,L+\nu]}(\rho_1) - \Phi^{(m)}_{\rm rPFC}|_{[L,L+\nu]}(\rho_2) \|_{\rm tr} 
        & \leq \| \Phi^{(m)}_{\rm Haar}|_{[L,L+\nu]}(\rho_1) - \Phi^{(m)}_{\rm rPFC}|_{[L,L+\nu]}(\rho_1) \|_{\rm tr} \nonumber \\
        & \qquad + \| \Phi^{(m)}_{\rm rPFC}|_{[L,L+\nu]}(\rho_1) - \Phi^{(m)}_{\rm rPFC}|_{[L,L+\nu]}(\rho_2) \|_{\rm tr}.
    \end{align}
    The first term on the rhs is $\leq \epsilon$ by direct application of Eq.~\eqref{eq:PFC_property}. The second term, by monotonicity of the trace distance under quantum channels, is $\| \Phi^{(m)}_{\rm rPFC}|_{[L,L+\nu]}(\rho_1-\rho_2)\|_{\rm tr} \leq \| \rho_1-\rho_2 \|_{\rm tr}$; but (up to tensoring with a trivial state $\ketbra{0}^{\otimes m}$) $\rho_1-\rho_2$ is the difference between the two moment operators at step $L-1$, which by hypothesis is $\leq (L-1)\epsilon$. The result follows. 
\end{itemize}

We remark that this proof did not use the MPS structure specifically, but only the isometric structure of the tensor network (i.e. the ability to write it as a sequential application of unitary gates). As long as the network consists of ${\rm poly}(N)$ gates, the error incurred by replacing Haar-random gates by random PFC gates remains small, $O({\rm poly}(N) m / \sqrt{\chi})$, see Sec.~\ref{sec:si_gen_lemma1}

\section{Proof of pseudorandomness for arbitrary isometric tensor network geometries \label{sec:si_general_isotns}}

Here we modify the proof presented in the main text for MPS to general isometric tensor network architectures. We define the state ensembles $\ppfctns$ (made of pseudorandom PFC gates), $\rpfctns$ (made of random PFC gates) and $\rhtns$ (made of Haar-random gates) in the same ways as $\ppfcmps$, $\rpfcmps$ and $\rhmps$, just replacing the MPS architecture with a general tensor network state (TNS). 

\subsection{Setup} 

Formally, we consider a weighted directed graph whose edges are qudit worldlines (with the direction of each edge going from past to future and the weight indicating the bond dimension $\chi_i$, allowed to vary across edges) and whose vertices can be of three types: unitary, input, or output.
\begin{enumerate}
    \item[(i)] Unitary vertices must have both incoming and outgoing edges, and must respect the condition $\prod_{i\in {\rm incoming}} \chi_i = \prod_{i \in {\rm outgoing}} \chi_i$ (required by unitarity). 
    \item[(ii)] Input vertices have only outgoing edges. They represent the injection of auxiliary qudits in the system. We restrict for simplicity to product states $\ket{0}$, which have one outgoing edge, and Bell pairs $\ket{\psi_{\rm Bell}} = (1/\sqrt{\chi}) \sum_{j=0}^\chi \ket{jj}$, which have two outgoing edges of the same bond dimension $\chi$. Both types of input vertices are illustrated in Fig.~2(a,c). 
    \item[(iii)] Output vertices have only one incoming edge and no outgoing edges. They represent degrees of freedom of the final tensor network state, or ``dangling legs'' in standard tensor network notation. 
\end{enumerate}
In analogy with the MPS case, we define the number of output vertices as $L$. 
We also define $N$ as the effective system size (i.e. number of qubits) via $N = \log_2(D)$, with $D = \prod_{\ell \in {\rm output}} \chi_\ell$ the total Hilbert space dimension. $N$ need not be an integer.
We assume that the total number of unitary vertices in the graph is $n_U\leq O({\rm poly}(N))$, which ensures the states' efficient preparability. 

We aim to prove that the states obtained from these graphs by placing a PFC gate of the appropriate dimension on each unitary vertex are computationally indistinguishable from Haar-random states of dimension $D = \prod_{\ell \in {\rm output}} \chi_\ell $, where the product runs over edges connected to output vertices. (Each output vertex represents a local Hilbert space and the output state lives in the tensor product of these spaces.)

The proof is analogous to the one reported in the main text for matrix product states, with some small modifications that we address in the following.

\subsection{Generalization of Lemma~1 \label{sec:si_gen_lemma1}}

We can repeat the same argument as in Sec.~\ref{sec:si_pmps_rmps}, but instead of $O(L\epsilon)$ as the upper bound on the trace distance [Eq.~\eqref{eq:si_upperbound_Leps}]
we obtain $O(\sum_{u\in {\rm gates}} \epsilon_u)$, since in this case each gate may act on a Hilbert space of different dimension. We have $\epsilon_u \leq O(m^2/d_u)$, with $d_u = \prod_{\ell \in {\rm outgoing}(u)} \chi_\ell $, and thus 
\begin{equation}
    \| \rho^{(m)}_{ \rpfctns} - \rho^{(m)}_{ \rhtns } \|_{\rm tr} \leq O(n_U m/\sqrt{d_{\rm min}}), 
    \qquad d_{\rm min} = \min_u \prod_{\ell \in {\rm outgoing}(u)} \chi_\ell . \label{eq:si_pvsr}
\end{equation}
Here $n_U$ is the total number of unitary gates in the network, again assumed to be $O({\rm poly}(N))$. If all the $\chi_i$ are super-polynomial in $N$, then $d_{\rm min} = \omega({\rm poly}(N))$ and the two ensembles are indistinguishable in polynomial time.

\subsection{Generalization of Lemma~2}

Here we generalize Lemma~2 to the more general class of tensor network states (TNS) beyond MPSs. We aim to prove that the ensemble of TNSs built out of Haar-random gates is indistinguishable from the Haar-random ensemble on the whole Hilbert space.
In analogy with the MPS case, we can compute the $m$-copy distinguishability, perform the Haar averages, and obtain a modified graph corresponding to the partition functions in Fig.~2(b,d) in the main text. 
The new graph is undirected, weighted and labeled---each edge has one of two labels, that we denote by `solid' and `dashed' in keeping with Fig.~1 and 2 in the main text. The new graph (interaction graph of the partition function) is built from the old one (the tensor network diagram) according to these rules:
\begin{itemize}
    \item All edges between two unitary vertices become `solid' and keep their weight $\chi_i$.
    \item Output vertices and their incoming edges are removed.
    \item Input vertices are removed. For input vertices of product-state type, the outgoing edge is also removed. For input vertices of Bell-pair type, the outgoing edges are merged into a single `solid' edge whose weight is still $\chi$. 
    \item Each unitary vertex is replaced by two vertices. One is connected to all the formerly outgoing edges, the other to all the formerly incoming edges. The two vertices are connected to each other by a new `dashed' edge of weight $d_u$ (Hilbert space dimension on which the unitary acts). 
\end{itemize}
This graph serves as the interaction graph for an $S_m$-valued magnet's partition function: each vertex hosts a spin $\alpha\in S_m$, 
each `solid' edge connecting two spins $\alpha_{i,j}$ contributes a Boltzmann weight $\chi_\ell^{|\alpha_i\alpha_j^{-1}|}$, 
each `dashed' edge connecting two spins $\alpha_{i,j}$ contributes $\textsf{Wg}_{d_u}(\alpha_i\alpha_j^{-1})$, with $\chi_\ell$ and $d_u$ the edge weights as defined above. 
In addition, there is an overall prefactor of $\prod_{\ell\in {\rm Bell}} \chi_\ell^{-m}$, with the product running over Bell pair input states in the original graph, coming from the normalization prefactor in $\ketbra{\psi_{\rm Bell}}^{\otimes m}$.

The proof of computational indistinguishability proceeds in the same way as for the MPS case until the derivation of Eqs.~(3-4) in the main text. 

\subsubsection{Uniform spin configurations}

To prove that $\Delta_m^{\rm u.}$ is small, Eq.~(5) of the main text, we need to show that the Boltzmann weight $e^{-E[\{e,e\}]}$ of a uniform spin configuration is close to $D^{-m}$. 
We have 
\begin{align}
    e^{-E[\{e,e\}]} 
    & = \prod_{\ell\in{\rm Bell}} \chi_\ell^{-m} \prod_{\ell\in{\rm solid}} \chi_\ell^m \prod_{u\in{\rm dashed}} \textsf{Wg}_{d_u}(e) \nonumber \\
    & = \prod_{\substack{\ell\in\text{solid},\\ \text{not Bell}}} \chi_\ell^m \prod_{u\in{\rm dashed}} d_u^{-m} [1+O(m^2/d_u)], \label{eq:si_uniformconfig}
\end{align}
where we used the results of Sec.~\ref{sec:si_wg} to approximate the Weingarten functions, and restricted the product over edges to edges that did not originate from Bell pairs. 
Let us now think about this product in terms of the original (tensor network) graph. Writing $d_u = \prod_{\ell \in {\rm outgoing}(u)} \chi_\ell$ we may view Eq.~\eqref{eq:si_uniformconfig} as a product over edges of the graph. Edges between input vertices (whether product or Bell-pair) and unitaries do not appear in the product; edges connecting two unitaries cancel (they contibute a $\chi_\ell^m$ to the first factor and a $\chi_\ell^{-m}$ to the second);
edges between unitaries and output vertices contribute a $\chi_\ell^{-m}$ factor.
In all, the result is $(\prod_{\ell \in{\rm output}} \chi_\ell)^{-m} = D^{-m}$, with $D$ the physical Hilbert space dimension. The error terms stem only from the Weingarten functions and can be bounded above by $|[1+O(m^2/d_u)]^{n_U}-1|$ with $n_U$ the number of unitaries. 
Overall this gives 
\begin{equation}
    \Delta_m^{\rm u.} \leq O(n_U m^2 / d_{\rm min}),
\end{equation}
which is smaller than the error incurred from the replacing pseudorandom with Haar-random gates, Eq.~\eqref{eq:si_pvsr}.

\subsubsection{Non-uniform spin configurations}

The derivation of Eq.~(7) in the main text makes no reference to the MPS structure and thus proceeds unchanged. Once we have a partition function with positive Boltzmann weights
\begin{equation}
    e^{-( E^+[\{\sigma_i',\tau_i\}] - E^+[\{e,e\}])} = \prod_{(i,j) \in {\rm edges}} \left\{ 
    \begin{aligned}
        & 1& \text{ if } \alpha_i = \alpha_j \\
        & O(1/\chi_{ij}) & \text{ otherwise}
    \end{aligned}\right.,
\end{equation}
we can bound it above by simply dropping edges, i.e., setting the associated factor to 1 regardless of the spins $\alpha_{i,j}$. 
We drop edges until we obtain a spanning tree $\mathcal{T}$ of the graph, as sketched in Fig.~2(b,d), and move to bond variables $\beta_{ij} \equiv \alpha_i \alpha_i^{-1}$ starting from the pinned site $\sigma_1' = e$ and covering all links in the spanning tree.
This gives
\begin{align}
    \sum_{\{\sigma_i',\tau_i\}\text{ n.u.}} e^{-( E^+[\{\sigma_i',\tau_i\}]-E^+[\{e,e\}]) }
    & \leq -1+ \prod_{\ell \in \mathcal{T}, \text{ solid}} \left( \sum_\beta \chi_\ell^{|\beta|-m} \right) 
    \prod_{u \in \mathcal{T}, \text{ dashed}} \left( \sum_\beta \frac{|\textsf{Wg}_{d_u}(\beta)|}{\textsf{Wg}_{d_u}(e)} \right) \nonumber \\
    & \leq -1 + \prod_{\ell \in \mathcal{T}, \text{ solid}} [1+O(m^2/\chi_\ell)]
    \prod_{u \in \mathcal{T}, \text{ dashed}} [1+O(m^2/d_u)] \nonumber \\
    & \leq O(n_E m^2/\chi_{\rm min}) + O(n_U m^2/d_{\rm min}).
\end{align}
Here $\chi_{\rm min}$ is the minimum bond dimension of an inner bond (i.e., a bond connecting unitary gates---physical bonds are excluded), and $n_E$ is the number of inner bonds. We have that $d_{\rm min} \geq \chi_{\rm min}$ and that $n_E = \Theta(n_U)$, so we can simplify the error term as $\leq O(n_U m^2/\chi_{\rm min})$. This may be larger or smaller than the error incurred from replacing pseudorandom gates with Haar-random gates, Eq.~\eqref{eq:si_pvsr}, depending on the connectivity of the circuit, i.e. whether $\chi_{\rm min} < \sqrt{d_{\rm min}}$ or not.

In summary, we conclude that 
\begin{equation}
    \| \rho^{(m)}_{ \rhtns } - \rho^{(m)}_{\mathcal{E}_{\rm Haar}} \|_{\rm tr} \leq O(n_U m^2 / \chi_{\rm min}) + O(n_U m /\sqrt{d_{\rm min}}).
\end{equation}
For the case of uniform inner bond dimension $\chi$ across the network and $n_U = O(N)$, this yields the MPS result: $O(Nm^2/\sqrt{\chi})$. However this shows that the result is much more general, and that arbitrary isometric tensor network geometries still yield computationally random states as long as $n_U$ is not too large and there are no ``weak links'' where $\chi$ or $d_u$ become too small.  

\section{Proof of Ryu-Takayanagi formula for pseudoentangled holographic tensor network \label{sec:si_rt}}

Here we prove a lower bound on the average entropy $S(A)$ of a subsystem $A$ in the pseudorandom holographic tensor network ensemble, $\holo$, considered in the main text. 
Our strategy for lower-bounding $S(A)$ is to evaluate the average purity, $\mathbb{E}[{\rm Tr}(\rho_A^2)]$. Indeed we have the following:
\begin{fact} 
 We have, in general, $-\log\mathbb{E}[{\rm Tr}(\rho_A^2)] \leq \mathbb{E}[S(A)]$.
\end{fact}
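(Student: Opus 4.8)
\emph{Proof proposal.} The plan is to chain three standard facts: the identity relating purity to the second R\'enyi entropy, the monotonicity of R\'enyi entropies in their order, and Jensen's inequality for the convex function $-\log$.

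First I would recall that for any normalized state $\rho_A$ the second R\'enyi entropy is $S_2(A) = -\log \mathrm{Tr}(\rho_A^2)$, and that R\'enyi entropies $S_\alpha$ are non-increasing in $\alpha$, so that $S_2(A) \le S_1(A) = S(A)$, the von Neumann entropy. Taking the expectation over $\psi \sim \holo$ then gives $\mathbb{E}[-\log \mathrm{Tr}(\rho_A^2)] = \mathbb{E}[S_2(A)] \le \mathbb{E}[S(A)]$. Separately, since $x \mapsto -\log x$ is convex on $(0,\infty)$, Jensen's inequality yields $-\log \mathbb{E}[\mathrm{Tr}(\rho_A^2)] \le \mathbb{E}[-\log \mathrm{Tr}(\rho_A^2)]$. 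Combining the two displayed inequalities gives $-\log \mathbb{E}[\mathrm{Tr}(\rho_A^2)] \le \mathbb{E}[S(A)]$, as claimed.

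There is no real obstacle here: one only needs to check that the steps are applied in a regime where everything is well-defined, i.e. that $\mathrm{Tr}(\rho_A^2) \in (0,1]$ for a normalized pure-state reduced density matrix (so $-\log$ is finite and the relevant expectations exist), which is automatic. The content of the section then lies not in this Fact but in its use: it reduces the task of lower-bounding $\mathbb{E}[S(A)]$ to the computation of the average purity $\mathbb{E}[\mathrm{Tr}(\rho_A^2)]$, which is the $m=2$ case of the $S_m$-valued magnet partition function already set up for the indistinguishability proof, and which can be evaluated exactly because the PFC ensemble is an exact $2$-design. Matching the resulting bound to the Schmidt-rank upper bound $S(A) \le |\gamma_A|\log(\chi)$ is what yields the Ryu--Takayanagi statement of Theorem~\ref{thm:holo_RT}.
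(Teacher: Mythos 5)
Your proposal is correct and uses exactly the same two ingredients as the paper's proof: Jensen's inequality for the convex function $-\log$ to bound $-\log\mathbb{E}[\mathrm{Tr}(\rho_A^2)]$ by $\mathbb{E}[S_2(A)]$, and monotonicity of the R\'enyi entropies in the index to get $S_2(A) \leq S(A)$. Nothing to add.
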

\begin{proof}
First, by convexity, we have $\mathbb{E}[-\log(x)] \geq -\log\mathbb{E}[x]$, and thus
\begin{equation}
-\log\mathbb{E}[{\rm Tr}(\rho_A^2)] \leq \mathbb{E}[-\log {\rm Tr}(\rho_A^2)] = \mathbb{E}[S_2(A)],
\end{equation}
where we identified the second Renyi entropy $S_2(A) = -\log {\rm Tr}(\rho_A^2)$. 
Secondly, the well known monotonicity of the Renyi entropies $S_n(A)$ vs the index $n$, and the fact that $S(A) = \lim_{n\to 1} S_n(A)$, gives $S_2(A) \leq S(A)$ in general, and thus $\mathbb{E}[S(A)] \geq \mathbb{E}[S_2(A)]$.
\end{proof}

In the following we treat the unitary gates as if they are genuinely Haar-random: since the average purity is a second-moment quantity, this is justified by noting that the PFC ensemble forms an exact two-design (due to the random Clifford operation that is part of the PFC sequence). 
The average purity for our tensor network states is given by a partition function of spins valued in $S_2 = \{e,s\}$:
\begin{equation}
    {\rm Tr}(\rho_A^2) = \frac{1}{D} \sum_{\{\sigma_i,\tau_i \in S_2 \}} e^{-E_{A,\bar{A}}[\{\sigma_i,\tau_i\}]},
\end{equation}
where the energy term $E_{A,\bar{A}}[\{\sigma_i,\tau_i\}]$ comprises both bulk terms (overlaps or  Weingarten functions between neighboring permutations) and boundary terms (overlaps between $\sigma$'s and $e$ in $A$, and between $\sigma$'s and $s$ in $\bar{A}$, respectively). 
The prefactor of $1/D = \chi^{-L}$ comes from the normalization of the input Bell pairs, $\frac{1}{\sqrt \chi} \sum_{j=1}^\chi \ket{jj}$, contributing a factor of $\chi^{-2}$ to $\rho_A^2$ for each pair of output qudits, hence overall $(\chi^{-2})^{L/2} = \chi^{-L} = 1/D$. 

Since we are interested in an upper bound on the purity (i.e. a lower bound on the entropy), we can switch to positive Boltzmann weights: 
\begin{equation}
    {\rm Tr}(\rho_A^2) 
    \leq \frac{1}{D} \sum_{\{\sigma_i,\tau_i \in S_2 \}} \left| e^{- E_{A,\bar{A}}[\{\sigma_i,\tau_i\}]}\right| 
    = \frac{1}{D} \sum_{\{\sigma_i,\tau_i \in S_2 \}} e^{-E^+_{A,\bar{A}}[\{\sigma_i,\tau_i\}]},
\end{equation}
The graph contains two types of bonds: overlaps between permutations (solid lines in Fig.~2(d)) and Weingarten functions (dashed lines in Fig.~2(d)). 
The overlaps contribute a factor of either $\chi^2$ (if the two permutations they connect are the same) or $\chi$ (if they are opposite); the Weingarten functions are either $(\chi^6-1)^{-1}$ or $\chi^{-3} (\chi^6-1)^{-1}$ respectively. 
Let us define $n_o$ and $n_w$ as the number of overlap bonds and Weingarten bonds, respectively; by counting the number of edges in the graph, we have $n_o = 3 n_w + L/2$ (recall $L$ is the number of output qudits).
Factoring out a $\chi^2$ from each of the $n_o$ overlap bonds and $(\chi^6-1)^{-1}$ from each of the $n_w$ Weingarten bonds, we get a factor of $\chi^{2n_o} / (\chi^6-1)^{n_w} = \chi^L (1-\chi^{-6})^{-n_w}$, thus  
\begin{align}
    {\rm Tr}(\rho_A^2) 
    & \leq ( 1-\chi^{-6})^{-n_w} \sum_{\{\sigma_i,\tau_i \in S_2 \}} e^{-\delta E^+_{A,\bar{A}}[\{\sigma_i,\tau_i\}]}
    \leq [1+O(N/\chi^6)] \sum_{\{\sigma_i,\tau_i \in S_2 \}} e^{-\delta E^+_{A,\bar{A}}[\{\sigma_i,\tau_i\}]}.
\end{align}
The energy terms are now normalized in such a way that they contribute a factor of 1 if the two permutations are equal, and $1/\chi$ (for overlap bonds) or $1/\chi^3$ (for Weingarten bonds) respectively if the permutations are different.
Since the partition function depends only on the relative value of neighboring spins, the pinned values of the spins at the boundary ($s$ in $A$ and $e$ in $\bar{A}$) allow us to turn the sum over site variables $\{\sigma_i,\tau_i\}$ into a sum over bond variables. This in turn can be rewritten in terms of cuts $\gamma$ through the bonds of the network (a bond is `cut' if and only if its spin variable is the transposition $s\in S_2$), giving
\begin{equation}
    {\rm Tr}(\rho_A^2) 
    \leq [1+O(N/\chi^6)]\sum_{\text{cuts }\gamma} \chi^{-|\gamma|},
\end{equation}
where $|\gamma|$ is the number of bonds intersected by $\gamma$ (with the convention that a Weingarten bond counts as 3 bonds in this case). This is a RT formula with a fluctuating cut; then, owing to the large bond dimension $\chi$, fluctuations of the cut are suppressed and one obtains `mininum-cut' prescription. 

To show this, let us (very loosely) upper bound the number of possible cuts of a given length $\ell$ as $\binom{n_b}{\ell}$, with $n_b = n_w + n_o$ the total number of bonds in the network (on hyperbolic tilings, where bulk and boundary scale in the same way, we have $n_b = O(N)$). 
Then, letting $\ell_{\rm min} = \min_{\gamma_A} |\gamma_A|$, we have 
\begin{align}
    {\rm Tr}(\rho_A^2) 
    & \leq [1+O(N/\chi^6)] \sum_{\ell = \ell_{\rm min}}^{n_b} \binom{n_b}{\ell} \chi^{-\ell}
    \leq [1+O(N/\chi^6)] \sum_{\ell = \ell_{\rm min}}^{\infty} (n_b / \chi)^{\ell} \\
    & \leq  \frac{1+O(N/\chi^6)}{1-(n_b / \chi)} (n_b / \chi)^{\ell_{\rm min}}
\end{align}

Finally, we arrive at the desired bound: 
\begin{align}
\mathbb{E}[S(A)] 
& \geq -\log \mathbb{E}[{\rm Tr}(\rho_A^2)] 
\geq \ell_{\rm min} \log \frac{\chi}{n_b} + \log\left(1+O(N/\chi) \right) \\
& \geq \ell_{\rm min} \log(\chi)\left[1-\frac{\log n_b}{\log \chi} \right] + O(N/\chi)
\end{align}
Since $n_b = O(N)$ and $\log\chi = \omega(\log N)$, we have $\log(n_b) / \log(\chi) = o(1)$ (vanishing in the thermodynamic limit). This gives the result quoted in the main text. 

Nothing in this proof specifically hinges on the choice of holographic network; e.g., all the steps would work analogously for arbitrary tilings of the hyperbolic plane that admit an isometric realization (e.g. this includes all tensors with an even number of legs, and potentially other cases with suitable choices of bond dimensions). 

\end{document}